\documentclass{article}

  \usepackage[preprint]{neurips_2026}

\usepackage[utf8]{inputenc} 
\usepackage[T1]{fontenc}    
\usepackage{hyperref}       
\usepackage{url}            
\usepackage{booktabs}       
\usepackage{amsfonts}       
\usepackage{nicefrac}       
\usepackage{microtype}      
\usepackage{xcolor}         

\usepackage{amsmath}
\usepackage{amssymb}
\usepackage{stmaryrd}
\usepackage{amsthm}

\newtheorem{theorem}{Theorem}
\newtheorem{lemma}{Lemma}

\title{Differential Privacy in the Extensive-Form Bandit Problem}

\author{%
  Stephen Pasteris \\
  The Alan Turing Institute\\
  London, United Kingdom \\
  \texttt{spasteris@turing.ac.uk} \\
   \And
  Rahul Savani \\
  The Alan Turing Institute;\\
  The University of Liverpool \\
  Liverpool, United Kingdom \\
  \texttt{rahul.savani@liverpool.ac.uk} \\
   \AND
  Theodore Turocy \\
  The Alan Turing Institute;\\
  The University of East Anglia \\
  Norwich, United Kingdom \\
   \texttt{t.turocy@uea.ac.uk} \\
}

\begin{document}

\maketitle

\begin{abstract}
We consider the extensive-form bandit problem, where on each trial the learner (a user coordinated by a server) plays an extensive-form game against an oblivious adversary, observing the information sets it finds itself in as well as the resulting payoff/loss. We give an algorithm for this problem that satisfies $\epsilon$-local differential privacy and attains a regret of $\tilde{O}(\sqrt{A\ln(S)T}/\epsilon)$, where $A$ is the total number of actions that the learner can possibly take, $S$ is the number of the learner's possible reduced strategies, and $T$ is the number of trials. On each trial, the time complexity of our algorithm is, up to a factor logarithmic in the maximum number of actions at an infoset, equal to the time required for the server to transmit the reduced strategy to the user. We note that local differential privacy is the strongest version of differential privacy and, to the best of our knowledge, this is the first work to study differential privacy of any form in the extensive-form bandit problem.
\end{abstract}

\newcommand{\nc}[1]{\newcommand{#1}}
\nc{\be}{\begin{equation*}}
\nc{\ee}{\end{equation*}}
\nc{\lss}{\mathcal{S}}
\nc{\nss}{\mathcal{S}'}
\nc{\bns}{\boldsymbol{\sigma}'}
\nc{\bls}{\boldsymbol{\sigma}}
\nc{\lst}[1]{\sigma_{#1}}
\nc{\nst}[1]{\sigma'_{#1}}
\nc{\lost}[1]{\ell_{#1}}
\nc{\ls}{\sigma}
\nc{\ns}{\sigma'}
\nc{\loss}[2]{\Lambda(#1,#2)}
\nc{\nma}{A}
\nc{\reg}{R}
\nc{\expt}[1]{\mathbb{E}\left[#1\right]}
\nc{\ep}{\epsilon}
\nc{\cnbs}{\boldsymbol{\mu}'}
\nc{\cnbsh}{\hat{\boldsymbol{\mu}}'}
\nc{\cnst}[1]{\mu'_{#1}}
\nc{\cnsht}[1]{\hat{\mu}'_{#1}}
\nc{\clbs}{\boldsymbol{\mu}}
\nc{\cprob}[2]{\mathbb{P}\left[#1\,|\,#2\right]}
\nc{\alg}{\textsc{DP-EFB}}
\nc{\indi}[1]{\llbracket #1\rrbracket}
\nc{\ver}{\mathcal{V}}
\nc{\lns}{\mathcal{N}}
\nc{\nns}{\mathcal{A}}
\nc{\tns}{\mathcal{L}}
\nc{\losf}{\lambda}
\nc{\anod}{v}
\nc{\ch}[1]{\mathcal{C}(#1)}
\nc{\rot}{r}
\nc{\us}{\sigma'}
\nc{\st}{\sigma}
\nc{\rf}{\mu}
\nc{\sts}{\mathcal{S}^\dag}
\nc{\rfs}{\mathcal{M}}
\nc{\rable}[1]{\mathcal{V}(#1)}
\nc{\ran}[1]{\mathcal{N}(#1)}
\nc{\rst}{\tilde{\sigma}}
\nc{\rsts}{\mathcal{S}}
\nc{\acs}{\mathcal{A}}
\nc{\sacs}[1]{\mathcal{A}(#1)}
\nc{\pnt}[1]{p(#1)}
\nc{\ac}{a}
\nc{\vth}{\vartheta_{t}}
\nc{\zti}[1]{\tilde{z}_{#1}}
\nc{\nosf}[1]{\phi_{#1}}
\nc{\cexpth}[2]{\mathbb{E}\left[#1\,\Bigg|\,#2\right]}
\nc{\rft}[1]{\mu_{#1}}
\nc{\stt}[1]{\sigma_{#1}}
\nc{\data}[1]{d_{#1}}
\nc{\datas}{\mathcal{D}}
\nc{\datass}{\mathcal{X}}
\nc{\sst}{\sigma}
\nc{\srf}{\mu}
\nc{\srfh}{\mu'}

\section{Introduction}

We consider a (perfect recall) extensive-form game. The \emph{learner} (which will be a user coordinated by a server) is one of the players of the game. Let $\rsts$ be the set of the learner's reduced strategies (i.e. those in the reduced strategic form of the game). We call any realisation of how the learner's opponents (including a potential ``chance'' player) play, an \emph{environment}. Given some $\st\in\rsts$ and an environment $\rf$, let $\loss{\st}{\rf}$ be the loss (the negation of the payoff) incurred by the learner when it plays $\st$ against $\rf$. Let $\nma$ be the total number of actions available to the learner. i.e. $\nma$ is the sum of the number of actions available at each information set (a.k.a. \emph{infoset}) of the learner.

We consider the following problem. We have $T$ trials where on each trial $t$ we have a user with an unknown environment $\rft{t}$ fixed a-priori. The way the users play the game will be coordinated by a server. On each trial $t\in[T]$ in turn:
\begin{itemize}
\item The server stochastically selects some $\stt{t}\in\rsts$ and sends it to the user.
\item The user plays the game against $\rft{t}$ using the reduced strategy $\stt{t}$, observing the infosets that it finds itself in as well as the resulting loss:
\be
\lost{t}:=\loss{\stt{t}}{\rft{t}}\,.
\ee
\item The user (based on what it has observed) stochastically selects some data $\data{t}$ from some space $\datas$ prescribed by the algorithm. The user then sends $\data{t}$ to the server.
\end{itemize}
Our performance is measured by the \emph{(expected) regret}, defined as:
\be
\reg := \expt{\sum_{t\in[T]}\lost{t}}-\min_{\st\in\rsts}\sum_{t\in[T]}\loss{\st}{\rft{t}}\ ,
\ee
which is the difference between our expected cumulative loss and that which would have been incurred by always playing the best fixed reduced strategy.

When there are no constraints on the data $\data{t}$ that is sent to the server on trial $t$, the problem has been well studied \cite{Kozuno2021ModelFreeLF, Bai2022NearOptimalLO, Fiegel2022AdaptingTG}. In particular, \cite{Fiegel2022AdaptingTG} achieves a regret of $\reg\in\tilde{\mathcal{O}}(\sqrt{H \nma T})$ where $H$ is the height of the learners tree of infosets.

In this paper we focus on maintaining $\ep$-\emph{local differential privacy} \cite{Kasiviswanathan2008WhatCW}. Essentially, this means that, on any trial $t$, one cannot deduce anything about $\rft{t}$ from the data $\data{t}$. Formally, we have some $\ep\in(0,1)$ and for any measurable $\datass\subseteq\datas$, any $\sst\in\rsts$ and any two environments $\srf$ and $\srfh$ we must have that:
\be
\cprob{\data{t}\in\datass}{\stt{t}=\sst, \rft{t}=\srf} \leq\exp(\ep)\cprob{\data{t}\in\datass}{\stt{t}=\sst, \rft{t}=\srfh}\ ,
\ee
where the probabilities are over the stochasticity of the user's selection of $\data{t}$. We note that local differential privacy is the strongest version of differential privacy. To the best of our knowledge, even the weaker notion of non-local differential privacy (where information is not kept private from the server but only the other users) is currently unexplored for this problem.

We will give an algorithm \alg\ that maintains $\ep$-local differential privacy whilst achieving a regret bound of:
\be
\reg\in\tilde{O}(\sqrt{A\ln(|\rsts|)T}/\ep)\,.
\ee
A salient feature of \alg\ is its extreme computational efficiency - up to a factor that is only logarithmic in the maximum number of actions available at an infoset, \alg\ takes a per-trial time that is equal to the time required to send the reduced strategy to user. \alg\ has a space complexity and initialisation time of only $\mathcal{O}(\nma)$.

\subsection{Related Work}
The notion of differential privacy was introduced in \cite{Dwork2006CalibratingNT}. The notion of local differential privacy was introduced in \cite{Kasiviswanathan2008WhatCW}. The work \cite{Tossou2017AchievingPI} was the first to study differential privacy in the standard adversarial bandit problem (the special case of our problem when we have a single infoset), giving an $\ep$-locally differentially private algorithm with a regret bound of $\tilde{\mathcal{O}}(\sqrt{AT}/\ep)$. The work \cite{Asi2025FasterRF} gave an algorithm for the weaker notion of non-local differential privacy in the standard adversarial bandit problem with a regret bound of $\mathcal{O}(\sqrt{AT/\ep})$. Differentially private reinforcement learning (where transitions are stochastic rather than adversarial) is also related to our work, and was studied in \cite{Balle2016DifferentiallyPP, Garcelon2020LocalDP, Bai2024DifferentiallyPN} among other works.

It should be noted that the works \cite{Farina2021BanditLO, Maiti2025EfficientNA}, which solve the adversarial bandit problem in extensive form games, update based only on the observed loss and not the infosets that are traversed. Due to this fact, we conjecture that the machinery of \cite{Tossou2017AchievingPI} could be applied to these algorithms to make them $\ep$-locally differentially private. However, the regret bound of the modified algorithm of \cite{Farina2021BanditLO} would be $\tilde{\mathcal{O}}(\sqrt{(\varphi^2+H)\nma^3 T}/\ep)$ where $\varphi$ is dilated entropy distance, typically a large quantity. Noting that $\ln(|\rsts|)<\nma$ and is typically substantially less, our bound is dramatically better. Their per-trial time complexity of $\mathcal{O}(\nma)$ is also significantly larger than ours. The modified algorithm of \cite{Maiti2025EfficientNA}, whilst only having a regret bound worse than ours by logarithmic factors, has an extremely high per-trial time complexity of $\mathcal{O}(\nma^3)$ due to the need to solve a convex program.

\alg\ is inspired by the \emph{Laplace mechanism} \cite{Dwork2006CalibratingNT},  \textsc{Exp3-IX} \cite{Neu2015ExploreNM} and \emph{Dilated entropy mirror descent} (\textsc{DMD}) \cite{Hoda2010SmoothingTF, Farina2021BetterRF} but is not a simple combination of these works. Our analysis somewhat is inspired by (although is very different from) that of the \textsc{CanProp} algorithm \cite{Pasteris2023NearestNW} which was designed for efficiently solving contextual bandit problems. The work \cite{Kozuno2021ModelFreeLF} is related to our work as, like us, it constructs hallucinated losses (in a different way from us) which are inputted into \textsc{DMD}.

\subsection{Notation}

Let $\mathbb{N}$ be the set of natural numbers excluding $0$ and, for all $i\in\mathbb{N}$, let $[i]:=\{j\in\mathbb{N}\,|\,j\leq i\}$. Given a predicate $P$ let $\indi{P}:=1$ if $P$ is true and let $\indi{P}:=0$ otherwise. We note that we will sometimes write, for some predicate $P$, $\indi{P}x$ where $x$ is undefined. In all such situations $P$ will be false and hence $\indi{P}x=0$.

\section{The Game}

For this paper we need not fully define an imperfect information extensive-form game as we only need to focus on the learner's infosets and actions and not those of its opponents. In this one-sided view of the game we have a rooted tree whose nodes are either an infoset of the learner, an action of the learner, or a terminal node (a.k.a. \emph{leaf}). The children of each infoset are the set of actions that are available at that infoset and the children of each action are the possible leaves or infosets (of the learner) that can be possibly be encountered after that action is taken.

Formally, we have a rooted tree and a function $\losf:\tns\rightarrow[0,1]$ where $\tns$ is the set of leaves of the tree. Let $\ver$ be the set of nodes of the tree and let $\rot$ be the root of the tree. Given a node $\anod\in\ver$, let $\ch{\anod}$ be the set of its children and, given $\anod\neq\rot$, let $\pnt{\anod}$ be its parent. The set $\ver\setminus\tns$, of internal nodes of the tree, is partitioned into two sets $\lns$ (the set of our \emph{infosets}) and $\nns$ (the set of our \emph{actions}) satisfying the following rules:
\begin{itemize}
    \item $\rot\in\lns$.
    \item For all $\anod\in\lns$ we have $\ch{\anod}\subseteq\nns$.
    \item For all $\ac\in\nns$ we have $\ch{\ac}\subseteq\lns\cup\tns$.
\end{itemize}
Without loss of generality we assume that $|\ch{\anod}|> 1$ for all $\anod\in\lns$.

A \emph{(pure) strategy} is defined as a function $\st:\lns\rightarrow\ver$ with $\st(\anod)\in\ch{\anod}$ for all $\anod\in\lns$. An \emph{environment} is defined as a function $\rf:\nns\rightarrow\ver$ with $\rf(\ac)\in\ch{\ac}$ for all $\ac\in\nns$. Let $\sts$ and $\rfs$ be the sets of all possible strategies and environments respectively. We note that, in this paper, we will, without loss of generality, consider only deterministic environments, as to handle stochastic environments one simply draws an environment from a probability distribution (so our later regret bound will hold for stochastic environments as well).

Given some $\st\in\sts$ and $\rf\in\rfs$, when the pair $(\st, \rf)$ is \emph{played}, a root-to-leaf path through the tree is traversed, and the \emph{loss} $\loss{\st}{\rf}$ is defined as follows. We start at node $\rot$ and:
\begin{itemize}
\item When at a node $\anod\in\lns$, we move next to node $\st(\anod)$.
\item When at a node $\ac\in\nns$\,, we move next to node $\rf(\ac)$.
\item When at a node $\anod\in\tns$ we terminate and define $\loss{\st}{\rf}:=\losf(\anod)$.
\end{itemize}

\subsection{Reduced Strategies}
We now define the notion of a \emph{reduced strategy} which is a strategy that is restricted to the set of nodes in $\lns$ that are reachable via it. A reduced strategy hence contains exactly the information about a strategy that is relevant to playing the game.

Given some $\st\in\sts$ we define $\rable{\st}$ to be the set of nodes that are \emph{reachable} under $\st$ (that is, the set of nodes that can be reached by choosing some $\rf\in\rfs$ and playing $(\st,\rf)$). Formally, $\rable{\st}$ is the minimal subset of $\ver$ in which:
\begin{itemize}
\item $\rot\in\rable{\st}$.
\item For every $\anod\in\lns\cap\rable{\st}$ we have $\st(\anod)\in\rable{\st}$.
\item For every $\ac\in\nns\cap\rable{\st}$ we have $\ch{\ac}\subseteq\rable{\st}$.
\end{itemize}
Given a strategy $\st\in\sts$ we define its corresponding \emph{reduced strategy} $\rst:\rable{\st}\cap\lns\rightarrow\ver$ to be such that $\rst(\anod):=\st(\anod)$ for all $\anod\in\rable{\st}\cap\lns$, and we define $\rable{\rst}:=\rable{\st}$. Let $\rsts$ be the set of all such reduced strategies, noting that this is typically dramatically smaller than $\sts$ as many strategies can share the same reduced strategy. Note that since the domain of a reduced strategy is equal to those nodes in $\lns$ which are reachable via it, we can define, for all $(\st,\rf)\in\rsts\times\rfs$, the value of $\loss{\st}{\rf}$ and how $(\st,\rf)$ is played in exactly the same way as we did for pairs in $\sts\times\rfs$. Given $\st\in\rsts$, define:
\be
\ran{\st}:=\lns\cap\rable{\st}~~~~~,~~~~~\sacs{\st}:=\acs\cap\rable{\st}\,.
\ee

\section{The Problem and Result}\label{parsec}

We have a-priori knowledge of the tree but not necessarily the function $\losf$. We have $T$ trials. We have a server and for each trial $t\in[T]$ we have a user with an unknown environment $\rft{t}\in\rfs$ fixed a-priori. On each trial $t\in[T]$ in turn, the following happens:
\begin{enumerate}
\item The server stochastically chooses some reduced strategy $\stt{t}\in\rsts$ and passes it to the user.
\item The pair $(\stt{t},\rft{t})$ is played. The user observes the nodes in $\acs$ that are traversed (i.e. it knows the actions that it takes) as well as the resulting loss: \be
\lost{t}:=\loss{\stt{t}}{\rft{t}}\,.
\ee
\item The user (based on what it has observed) stochastically selects some data $\data{t}$ from some space $\datas$ prescribed by the algorithm. The user then sends $\data{t}$ to the server.
\end{enumerate}
We have some constant $\ep\in(0,1)$. It is required that for each trial $t$, the data sent by the user is $\ep$-\emph{locally differentially private}. Formally, for any measurable $\datass\subseteq\datas$, any $\sst\in\rsts$ and any $\srf,\srfh\in\rfs$ we must have:
\be
\cprob{\data{t}\in\datass}{\stt{t}=\sst, \rft{t}=\srf} \leq\exp(\ep)\cprob{\data{t}\in\datass}{\stt{t}=\sst, \rft{t}=\srfh}\ ,
\ee
where the probabilities are over the stochasticity of the user's selection of $\data{t}$.

Our performance is measured by the (expected) \emph{regret}, defined as:
\be
\reg := \expt{\sum_{t\in[T]}\lost{t}}-\min_{\st\in\rsts}\sum_{t\in[T]}\loss{\st}{\rft{t}}\ ,
\ee
which is the difference between our expected cumulative loss and that which would have been incurred by always playing the best fixed reduced strategy.

In this paper we give an algorithm \alg\ and prove the following theorem about it.

\begin{theorem}\label{mainth}
\alg\ is $\ep$-locally differentially private and has a regret bound of:
\be
\reg\leq 1+2\sqrt{\left(\frac{6\ln(T)}{\ep}+\frac{9(e-2)}{\ep^2}\right)|\acs|\ln(|\rsts|)T}\,.
\ee
The per-trial time complexity of \alg\ is in:
\be
\mathcal{O}\left(\max_{\st\in\rsts}\sum_{\anod\in\ran{\st}}\ln(|\ch{\anod}|)\right)\,.
\ee
\alg\ requires $\mathcal{O}(|\acs|)$ time to initialise and has a space complexity of $\mathcal{O}(|\acs|)$.
\end{theorem}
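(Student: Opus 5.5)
Since the paper has not yet specified \alg, I will first fix its form and then prove the three parts of Theorem~\ref{mainth} for that construction.

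\textbf{The algorithm.} The user sends back the privatised scalar $\data{t}:=\lost{t}+\zeta_t$, with $\zeta_t$ Laplace noise of scale $1/\ep$. Because $\lost{t}\in[0,1]$, changing the environment $\srf$ to $\srfh$ (with $\stt{t}=\sst$ fixed) shifts the mean of $\data{t}$ by at most $1$. The Laplace density ratio is then at most $\exp(\ep)$ pointwise, which gives $\ep$-local differential privacy at once. The user does not report the traversed path, since that would reveal $\rft{t}$.

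The server keeps a behavioural strategy in dilated-entropy (DMD) form: a conditional distribution $\pi_t(\cdot|\anod)$ over $\ch{\anod}$ at every infoset. It samples $\stt{t}$ top-down, drawing an action only at infosets reachable under the actions already drawn. It then builds a hallucinated loss estimate supported on all of $\sacs{\stt{t}}$, the whole reachable subtree and not just the traversed path, because the server does not know the path. A single clipped value $\tilde{z}_t:=\min(\data{t},c)$ (or a shifted, IX-style version) is distributed over that subtree. Weight updates are dilated exponential weights, implemented through a sum-tree at each infoset. Sampling from such a tree costs $\mathcal{O}(\ln|\ch{\anod}|)$ per visited infoset, and the update touches only the infosets in $\ran{\stt{t}}$. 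This gives the stated per-trial time. Initialisation and storage are $\mathcal{O}(|\acs|)$ because each action needs only a constant number of tree nodes.

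\textbf{Regret.} I would view dilated exponential weights as Hedge over $\rsts$ run on the induced losses $\hat\Lambda_t(\st)=\sum_{\ac\in\sacs{\st}}\hat\ell_t(\ac)$. This equivalence is the standard product-form identity for DMD with weights equal to $1$ along subtrees. Hedge then gives
\[
\sum_t\mathbb{E}_{\st\sim p_t}\hat\Lambda_t(\st)-\sum_t\hat\Lambda_t(\st^*)\le \frac{\ln|\rsts|}{\eta}+\eta\sum_t\mathbb{E}_{\st\sim p_t}\hat\Lambda_t(\st)^2,
\]
which requires $\eta\hat\Lambda_t\ge -1$. Negative Laplace noise breaks this, so I would clip $\zeta_t$ at magnitude $\ln(T)/\ep$. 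Clipping fails with total probability $\le 1/T$ per trial, and this failure event accounts for the additive $1$ in the bound; it is also where the $\ln(T)/\ep$ term enters.

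Next I need unbiasedness: $\mathbb{E}[\hat\Lambda_t(\st)]=\loss{\st}{\rft{t}}$ up to a constant shift common to all $\st$ (a common shift cancels in the regret). The idea is a telescoping, CanProp-style decomposition of the loss along the reachable tree. The conditional-probability factors cancel under expectation over $\stt{t}$ because the estimator divides by each infoset's conditional action probability, so each action $\ac$ contributes $\mathbb{E}[\hat\ell_t(\ac)]$ equal to the difference in expected loss contributed below $\ac$.

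For the variance term I would bound the second moment by $\mathbb{E}[\data{t}^2]\le 1+2/\ep^2$, times the count of actions in the reachable subtree. Summed against $p_t$, this gives a quantity of order $|\acs|(\ln T/\ep+(e-2)/\ep^2)$, where the $(e-2)$ comes from $e^x\le 1+x+(e-2)x^2$. Tuning $\eta$ then yields $2\sqrt{(\cdot)|\acs|\ln|\rsts|T}$.

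\textbf{Main obstacle.} The hard step is designing the estimator so that it is unbiased for the full reduced-strategy loss while its variance, summed over reachable actions, depends only on $|\acs|$. In particular the variance must not pick up products of inverse probabilities along paths. My first attempt would weight each action's estimate by its inverse conditional probability only, and verify through the DMD recursion that the variance telescopes level by level.
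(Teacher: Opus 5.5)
\textbf{There is a genuine gap in the regret part, at exactly the step you call the hard one.} Your privacy and complexity arguments are fine for the mechanism you chose. Note, though, that the theorem is about \alg\ as specified in the paper (vector report, $\beta$, $\eta$, $\gamma$), not about an algorithm of your choosing.

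In your scheme the server sees only $\sigma_t$ and the scalar $\ell_t+\zeta_t$, so every coordinate of its estimate is a multiple of the same random number. Two things then go wrong.
\begin{itemize}
\item The estimator you propose is biased. Suppose the root has actions $a_1,a_2$, and $a_1$ leads deterministically to an infoset with actions $b_1,b_2$ ending at leaves of loss $y_1,y_2$. The weights $\llbracket a\in\mathcal{A}(\sigma_t)\rrbracket/\pi_t(a)$ give $\mathbb{E}[\hat\Lambda_t(\{a_1,b_1\})-\hat\Lambda_t(\{a_1,b_2\})]=\pi_t(a_1)(y_1-y_2)$. That is not a shift common to all reduced strategies.
\item Your variance step treats $(\sum_{a}\hat\ell_t(a))^2$ as if it were $\sum_a\hat\ell_t(a)^2$. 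With one shared scalar and nonnegative weights, the cross terms $\hat\ell_t(a)\hat\ell_t(a')$ are nonnegative and of the same size, so nothing cancels. If an action leads by chance to $k$ two-action infosets, the second moment can be of order $k^2/\epsilon^2$ while $|\mathcal{A}|=O(k)$.
\end{itemize}
This is why the paper remarks that methods updating only on the observed loss (Farina et al.\ and Maiti et al.) end up with either a much worse bound or $\mathcal{O}(|\mathcal{A}|^3)$ per-trial time. Your plan does not reach the stated bound with the stated complexity.

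\textbf{The missing idea is that the path information can be reported privately}, contrary to your remark.
\begin{itemize}
\item In \alg\ the user returns a vector on $\mathcal{A}(\sigma_t)$, namely $d_t(a)=\llbracket a=z_t\rrbracket\ell_t+\phi_t(a)$, with independent Laplace noise of scale $2/\epsilon$ on every coordinate. Changing the environment only moves the loss from one coordinate to another. Comparing each environment with pure noise therefore costs $e^{\epsilon/2}$ twice (Lemma~\ref{diffplem}).
\item The server then divides by the \emph{reach} probability, IX-style: $\vartheta_t(a)=\llbracket a\in\mathcal{V}(\sigma_t)\rrbracket\eta d_t(a)/(\gamma\beta(a)+\mathbb{P}[a\in\mathcal{V}(\sigma_t)\mid\pi_t])$.
\item For every $\sigma$ the true loss sits on the single action $\tilde z_t(\sigma)$, which lies in $\mathcal{V}(\sigma_t)$ if and only if it equals $z_t$. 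Since the noise is independent across coordinates, the cross terms vanish (Lemma~\ref{exlnnrmlem}).
\item The weights $\beta$ are built so that $\sum_{a\in\mathcal{A}(\sigma)}1/\beta(a)=1$, which gives $\sum_a\vartheta_t(a)\geq-1$ on $\mathcal{E}$. They also satisfy $\sum_{a\in\mathcal{Z}(\mu)}\beta(a)=|\mathcal{A}|$, which bounds the IX bias by $\eta\gamma|\mathcal{A}|$.
\item Together with $\sum_\sigma\mathbb{P}[\sigma_t=\sigma]\sum_{a\in\mathcal{A}(\sigma)}1/\mathbb{P}[a\in\mathcal{V}(\sigma_t)]=|\mathcal{A}|$, these identities produce the $|\mathcal{A}|$ factor.
\end{itemize}

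\textbf{A smaller point on the failure event.} A failure probability of $1/T$ per trial sums to $1$ over $T$ trials, which is too large to condition on. The paper uses the threshold $\gamma/\eta=6\ln(T)/\epsilon$. This gives $T^{-3}$ per (trial, action) pair and $1/T$ overall, and that is the source of the additive $1$.
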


Theorem \ref{mainth} is proved in Appendix \ref{anasec}.

\nc{\nsts}{n}
\nc{\nat}{\mathbb{N}}
\nc{\pol}{\pi}
\nc{\polt}[1]{\pi_{#1}}
\nc{\pols}{\mathcal{P}}
\nc{\sample}[1]{\textsc{Sample}_t(#1)}
\nc{\samt}{\textsc{Sample}_t}
\nc{\lne}[1]{z_{#1}}
\nc{\lap}{Q}
\nc{\anum}{x}
\nc{\noise}[2]{\phi_{#1}(#2)}
\nc{\update}[2]{\textsc{Update}_t(#1,#2)}
\nc{\updt}{\textsc{Update}_t}
\nc{\nrm}[2]{\psi_{#1}(#2)}
\nc{\wgt}[2]{\omega_{#1}(#2)}
\nc{\ga}{\gamma}
\nc{\lr}{\eta}
\nc{\ssw}[1]{\psi'_{#1}}
\nc{\msn}{m}
\nc{\ime}{\beta}
\nc{\ptn}[1]{\mathcal{Z}(#1)}
\nc{\ptnp}[2]{\mathcal{Z}'(#1,#2)}
\nc{\sac}{b}

\section{The Algorithm}
Before its formal presentation, we first give an informal overview of the mechanics of \alg.

The server maintains a policy which associates, with each infoset $\anod\in\lns$, a probability distribution over the set of available actions $\ch{\anod}$. This policy varies from trial to trial and we let $\polt{t}$ be its value at the start of trial $t$. The initial policy $\polt{1}$ is constructed via a recursive subroutine. On trial $t$, the server draws the reduced strategy $\stt{t}$ according to $\polt{t}$, in that for each infoset $\anod\in\ran{\stt{t}}$ we have that $\stt{t}(\anod)$ is independently drawn from the probability distribution that $\polt{t}$ associates with $\anod$. Once $(\stt{t},\rft{t})$ is played, the user constructs a function $\data{t}:\sacs{\stt{t}}\rightarrow\mathbb{R}$ by, for every action $\ac\in\sacs{\stt{t}}$\,, independently drawing a number $\noise{t}{\ac}$ from the Laplace distribution (with parameter $2/\ep$) \cite{Dwork2006CalibratingNT} and setting:
\be
\data{t}(\ac):= \indi{\ac=\lne{t}}\lost{t}+\noise{t}{\ac}\ ,
\ee
where $\lne{t}$ is the last action taken by the user (i.e. the last node in $\acs$ that is encountered) when $(\stt{t},\rft{t})$ is played. It is this function $\data{t}$ that is sent back to the server. Inspired by \textsc{Exp3-IX} \cite{Neu2015ExploreNM}, for each action $\ac\in\acs$ we define its \emph{hallucinated loss} as:
\be
\frac{\indi{\ac\in\sacs{\stt{t}}}\data{t}(\ac)}{\ga\ime(\ac)+\cprob{\ac\in\sacs{\stt{t}}}{\polt{t}}}\ ,
\ee
where $\ga\ime(\ac)$ is a specific number that is associated with action $\ac$ at initialisation time (via a recursive subroutine). The server updates $\polt{t}$ to $\polt{t+1}$ via what is essentially the \textsc{DMD} \cite{Farina2021BetterRF} update but using the hallucinated losses instead of the (unknown) counterfactual losses. We note that since the hallucinated losses are zero for all actions not in $\sacs{\stt{t}}$, we can avoid the $\mathcal{O}(|\acs|)$ time complexity of \textsc{DMD}.

We now formally describe our algorithm \alg.

We first construct a function $\nsts:\lns\cup\acs\rightarrow\nat$ recursively (up the tree) as follows:
\begin{itemize}
\item For all $\ac\in\acs$ we have:
\be
\nsts(\ac):=\prod_{\anod\in\ch{\ac}\cap\lns}\nsts(\anod)\,.
\ee
\item For all $\anod\in\lns$ we have:
\be
\nsts(\anod) :=\sum_{\ac\in\ch{\anod}}\nsts(\ac)\,.
\ee
\end{itemize}
We note that, for any node $\anod\in\lns\cup\acs$\,, $\nsts(\anod)$ is the number of reduced sub-strategies (see Section \ref{prsksec}) that start from $\anod$.

We next construct a function $\msn:\lns\cup\acs\rightarrow\mathbb{N}$ recursively (up the tree) as follows:
\begin{itemize}
\item For all $\ac\in\acs$ we have:
\be
\msn(\ac):=1+\sum_{\anod\in\ch{\ac}\cap\lns}\msn(\anod)\,.
\ee
\item For all $\anod\in\lns$ we have:
\be
\msn(\anod):=\sum_{\ac\in\ch{\anod}}\msn(\ac)\,.
\ee
\end{itemize}
We note that for any node $\anod\in\lns\cup\acs$\,, $\msn(\anod)$ is the number of nodes in $\acs$ that are descendants of $\anod$.

We next construct a function $\ime:\lns\cup\acs\rightarrow\mathbb{R}$ recursively (down the tree) as follows:
\begin{itemize}
\item $\ime(\rot):=1$
\item For all $\ac\in\acs$ we have:
\be
\ime(\ac):=\msn(\ac)\ime(\pnt{\ac})\,.
\ee
\item For all $\anod\in\lns\setminus\{\rot\}$ we have:
\be
\ime(\anod):=\frac{\ime(\pnt{\anod})}{\msn(\anod)}\,.
\ee
\end{itemize}

Now define:
\be
\lr:=\left(\left(\frac{6\ln(T)}{\ep}+\frac{9(e-2)}{\ep^2}\right)\frac{\msn(\rot)T}{\ln(\nsts(\rot))}\right)^{-\frac{1}{2}}~~~~~,~~~~~
\ga:=\frac{6\ln(T)\lr}{\ep}\,.
\ee

A \emph{policy} is any function $\pol:\acs\rightarrow[0,1]$ in which, for all $\anod\in\lns$, we have:
\be
\sum_{\ac\in\ch{\anod}}\pol(\ac)=1\,.
\ee
Let $\pols$ be the set of all possible policies.

The server maintains a dynamic (in that it changes from trial to trial) policy. Let $\polt{t}\in\pols$ be this policy at the start of trial $t$. $\polt{1}$ is defined such that for all $\anod\in\lns$ and all $\ac\in\ch{\anod}$ we have:
\be
\polt{1}(\ac):=\frac{\nsts(\ac)}{\nsts(\anod)}\,.
\ee
At the start of trial $t$ the server constructs $\stt{t}$ via the recursive subroutine $\samt$ which takes, as input, a node in $\lns$. Given $\anod\in\lns$, $\sample{\anod}$ is implemented as follows:
\begin{enumerate}
    \item Draw $\stt{t}(\anod)$ from $\ch{\anod}$ such that, for all $\ac\in\ch{\anod}$, the probability that $\stt{t}(\anod)=\ac$ is equal to $\polt{t}(\ac)$.
    \item For all $\anod'\in\ch{\stt{t}(\anod)}\cap\lns$ run the subroutine $\sample{\anod'}$.
\end{enumerate}
Running $\sample{\rot}$ constructs the entire reduced strategy $\stt{t}$.

Once $\stt{t}$ has been constructed by the server it is sent to the user and $(\stt{t},\rft{t})$ is played. Let $\lne{t}$ be the last node in $\acs$ that is encountered when $(\stt{t},\rft{t})$ is played. 

Let $\lap$ be the Laplace distribution with parameter $2/\ep$. i.e. $\lap$ is the probability density function defined by:
\be
\lap(\anum):=\frac{\ep}{4}\exp\left(-\frac{\ep}{2}|\anum|\right)
\ee
for all $\anum\in\mathbb{R}$. The user constructs a function: \be
\data{t}:\sacs{\stt{t}}\rightarrow\mathbb{R}
\ee
by, for all $\ac\in\sacs{\stt{t}}$, drawing a number $\noise{t}{\ac}$ independently from $\lap$ and setting:
\be
\data{t}(\ac):= \indi{\ac=\lne{t}}\lost{t}+\noise{t}{\ac}\,.
\ee
Note then that $\datas$ is the space of all functions that map $\sacs{\st}$ into $\mathbb{R}$ for some $\st\in\rsts$.

The user then sends $\data{t}$ to the server which then constructs $\polt{t+1}$ via the recursive subroutine $\updt$ which takes, as input, a pair in $\ran{\stt{t}}\times\mathbb{R}^+$ and returns a number in $\mathbb{R}^+$. Given some $(\anod,\anum)\in\ran{\stt{t}}\times\mathbb{R}^+$, $\update{\anod}{\anum}$ is implemented as follows:
\begin{enumerate}
\item For each $\anod'\in\ch{\stt{t}(\anod)}\cap\lns$ run $\update{\anod'}{\polt{t}(\stt{t}(\anod))\anum}$ and let $\nrm{t}{\anod'}$ be its return value.
\item Set:
\be
\wgt{t}{\anod}:=\exp\left(\frac{-\lr\data{t}(\stt{t}(\anod))}{\ga\ime(\stt{t}(\anod))+\polt{t}(\stt{t}(\anod))\anum}\right)\prod_{\anod'\in\ch{\stt{t}(\anod)}\cap\lns}\nrm{t}{\anod'}\,.
\ee
\item For all $\ac\in\ch{\anod}$ set:
\be
\polt{t+1}(\ac) := \frac{\indi{\ac=\stt{t}(\anod)}\wgt{t}{\anod}\polt{t}(\ac)+\indi{\ac\neq\stt{t}(\anod)}\polt{t}(\ac)}{1-(1-\wgt{t}{\anod})\polt{t}(\stt{t}(\anod))}\,.
\ee
\item Return $1-(1-\wgt{t}{\anod})\polt{t}(\stt{t}(\anod))$\,.
\end{enumerate}
The server constructs $\polt{t+1}$ by running $\update{\rot}{1}$ and, for all $\anod\in\lns\setminus\ran{\stt{t}}$, retaining:
\be
\polt{t+1}(\ac):=\polt{t}(\ac)~~~\forall \ac\in\ch{\anod}\,.
\ee
Note that, for any $\anod\in\lns$, one can use a segment tree \cite{Sato2025FastEA} in order to, on any trial $t$ with $\anod\in\rable{\stt{t}}$, run $\sample{\anod}$ and $\update{\anod}{\cdot}$ in a time of $\mathcal{O}(\ln(|\ch{\anod}|))$.

\nc{\coms}{\sigma^*}
\nc{\prob}[1]{\mathbb{P}\left[#1\right]}
\nc{\aac}{a}
\nc{\delt}[1]{\Delta_{#1}}
\nc{\comz}[1]{z'_{#1}}
\nc{\exptn}[2]{\mathbb{E}_{\phi_{#1}}\left[#2\right]}
\nc{\cexpt}[2]{\mathbb{E}\left[#1\,|\,#2\right]}
\nc{\des}[1]{\mathcal{N}(#1)}
\nc{\stn}[1]{\mathcal{S}^\dag(#1)}
\nc{\rstn}[1]{\mathcal{S}(#1)}
\nc{\spolt}[1]{\pi'_{#1}}
\nc{\datp}[1]{\data{#1}}
\nc{\mst}{\mu}
\nc{\event}{\mathcal{E}}
\nc{\const}{c}
\nc{\cdata}[1]{d'_{#1}}
\nc{\clf}{y}
\nc{\clfh}{y'}
\nc{\aset}{\mathcal{X}}

\section{Proof Sketch}\label{prsksec}

Here we sketch the proof of Theorem \ref{mainth}, deferring the full proof to Appendix \ref{anasec}. We note that many steps in this sketch are far from immediate - it is intended only to give the reader an overview of how the analysis works. The interested reader can skip straight to Appendix \ref{anasec} as it does not depend on this section.

Note that the proof of the time complexity is virtually immediate, and we hence proceed to the proof of $\ep$-local differential privacy. Consider any trial $t\in[T]$ and take any $\sst\in\rsts$ and $\srf,\srfh\in\rfs$. We condition what follows on the value of $\stt{t}$. Draw a function $\cdata{t}:\sacs{\stt{t}}\rightarrow\mathbb{R}$ such that, for all $\ac\in\sacs{\stt{t}}$, we have that $\cdata{t}(\ac)$ is independently drawn from $\lap$. Let $\clf$ be the last node in $\acs$ that is encountered when $(\stt{t},\srf)$ is played. From the standard analysis of the Laplace mechanism \cite{Dwork2006CalibratingNT} we have, for any measurable $\aset\subseteq\mathbb{R}$ that:
\be
\cprob{\data{t}(\clf)\in\aset}{\rft{t}=\srf}\leq\exp\left(\frac{\ep}{2}\right)\prob{\cdata{t}(\clf)\in\aset}\,.
\ee
We also have, for all $\ac\in\sacs{\stt{t}}\setminus\{\clf\}$ and any measurable $\aset\subseteq\mathbb{R}$, that:
\be
\cprob{\data{t}(\ac)\in\aset}{\rft{t}=\srf}=\prob{\cdata{t}(\ac)\in\aset}\,.
\ee
Hence, for any measurable $\datass\subseteq\datas$ we have:
\begin{equation}\label{psdpeq1}
\cprob{\data{t}\in\aset}{\rft{t}=\srf}\leq\exp\left(\frac{\ep}{2}\right)\prob{\cdata{t}\in\aset}\,.
\end{equation}
In a similar vein we have, for any measurable $\datass\subseteq\datas$, that:
\begin{equation}\label{psdpeq2}
\prob{\cdata{t}\in\aset}\leq\exp\left(\frac{\ep}{2}\right)\cprob{\data{t}\in\aset}{\rft{t}=\srfh}\,.
\end{equation}
Substituting Equation \eqref{psdpeq2} into Equation \eqref{psdpeq1} completes the proof of $\ep$-local differential privacy.

We now turn to the proof of the regret bound. Let:
\be
\coms:=\operatorname{argmin}_{\st\in\rsts}\sum_{t\in[T]}\loss{\st}{\rft{t}}
\ee
and for all $t\in[T]$ define:
\be
\delt{t}=\sum_{\ac\in\sacs{\coms}}\ln(\polt{t}(\ac))\,.
\ee
Without loss of generality we can assume that $T\geq|\acs|$ as the bound clearly holds otherwise. For all $t\in[T]$ and $\ac\in\acs\setminus\sacs{\stt{t}}$ let $\noise{t}{\ac}$ be a random number drawn independently from $\lap$. Let $\event$ be the event that:
\be
|\noise{t}{\ac}|\leq\frac{\ga}{\lr}
\ee
for all $t\in[T]$ and $\ac\in\acs$.
We have:
\be
\expt{\sum_{t\in[T]}\lost{t}}\leq\cexpth{\sum_{t\in[T]}\lost{t}}{\event}+1
\ee
and hence we only need to prove that:
\begin{equation}\label{pseq0}
\cexpth{\sum_{t\in[T]}\lost{t}}{\event}\leq\sum_{t\in[T]}\loss{\coms}{\rft{t}}+2\sqrt{\left(\frac{6\ln(T)}{\ep}+\frac{9(e-2)}{\ep^2}\right)|\acs|\ln(|\rsts|)T}
\end{equation}
so from here on we will implicitly condition everything on the event $\event$. Note that for all $t\in[T]$ and $\ac\in\acs$ we have that $\noise{t}{\ac}$ is still independently drawn with mean $0$.

In order to prove Equation \eqref{pseq0}, consider first some $t\in[T]$. We seek to bound the \emph{expected progress} $\expt{\delt{t+1}-\delt{t}}$. First note that for all $\anod\in\ran{\stt{t}}$ we have, when we let $\ac:=\stt{t}(\anod)$, that:
\begin{equation}\label{pseq1}
\wgt{t}{\anod}=\exp\left(\frac{-\lr\data{t}(\stt{t}(\anod))}{\ga\ime(\stt{t}(\anod))+\cprob{\ac\in\rable{\stt{t}}}{\polt{t}}}\right)\prod_{\anod'\in\ch{\stt{t}(\anod)}\cap\lns}\nrm{t}{\anod'}\,.
\end{equation}
This implies that for all $\anod\in\ran{\coms}$ we have:
\begin{align*}
\ln\left(\frac{\polt{t+1}(\coms(\anod))}{\polt{t}(\coms(\anod))}\right) =& \frac{-\indi{\coms(\anod)\in\rable{\stt{t}}}\lr\data{t}(\coms(\anod))}{\ga\ime(\coms(\anod)) + \cprob{\coms(\anod)\in\rable{\stt{t}}}{\polt{t}}}-\indi{\anod\in\rable{\stt{t}}}\ln(\nrm{t}{\anod})\\
&+\sum_{\anod'\in\ch{\coms(\anod)}\cap\lns}\indi{\anod'\in\rable{\stt{t}}}\ln(\nrm{t}{\anod'})\,.
\end{align*}

By a telescoping sum we then have that:
\begin{align*}
\delt{t}-\delt{t+1} &=-\sum_{\anod\in\ran{\coms}}\ln\left(\frac{\polt{t+1}(\coms(\anod))}{\polt{t}(\coms(\anod))}\right)=\ln(\nrm{t}{\rot}) + \sum_{\ac\in\sacs{\coms}}\frac{\indi{\ac\in\rable{\stt{t}}}\lr\data{t}(\ac)}{\ga\ime(\ac) + \cprob{\ac\in\rable{\stt{t}}}{\polt{t}}}
\end{align*}
which gives us:
\begin{equation}\label{pseq2}
    \expt{\delt{t}-\delt{t+1}}\leq\expt{\ln(\nrm{t}{\rot})}+ \lr\loss{\coms}{\rft{t}}\,.
\end{equation}

In order to bound $\expt{\ln(\nrm{t}{\rot})}$ we make the following definitions.

Given some $\anod\in\ver$, let $\des{\anod}$ be the set of nodes in $\lns$ that are descendants of $\anod$ and let $\stn{\anod}$ be the set of all functions $\st:\des{\anod}\rightarrow\acs$ with $\st(\anod')\in\ch{\anod'}$ for all $\anod'\in\des{\anod}$. We call the elements of such a set $\stn{\anod}$ \emph{sub-strategies}. Note that such a sub-strategy tells us how to play the game from node $\anod$ onwards. As we did for strategies, we now define the concept of a \emph{reduced sub-strategy} which contains exactly the information in a sub-strategy that is relevant to playing the game. Formally, given some $\anod\in\ver$ and $\st\in\stn{\anod}$, define $\rable{\st}$ to be the minimal subset of $\ver$ in which:
\begin{itemize}
\item $\anod\in\rable{\st}$.
\item For every $\anod'\in\des{\anod}\cap\rable{\st}$ we have $\st(\anod')\in\rable{\st}$.
\item For every $\ac\in\nns\cap\rable{\st}$ we have $\ch{\ac}\subseteq\rable{\st}$.
\end{itemize}
 Intuitively, $\rable{\st}$ is the set of nodes that descend from $\anod$ and are reachable when we are resigned to playing $\st$ from node $\anod$ onwards. Given, for some $\anod\in\ver$, a sub-strategy $\st\in\stn{\anod}$ we define its corresponding \emph{reduced sub-strategy} $\rst:\rable{\st}\cap\des{\anod}\rightarrow\acs$ to be such that $\rst(\anod'):=\st(\anod')$ for all $\anod'\in\rable{\st}\cap\des{\anod}$\,, and we define $\rable{\rst}:=\rable{\st}$. Given some $\anod\in\ver$, let $\rstn{\anod}$ be the set of all corresponding reduced sub-strategies of sub-strategies in $\stn{\anod}$, and for all $\st\in\rstn{\anod}$ define:
\be
\ran{\st}:=\lns\cap\rable{\st}~~~~~,~~~~~\sacs{\st}:=\acs\cap\rable{\st}\,.
\ee

Given some $\rf\in\rfs$ we define $\rable{\rf}$ to be the set of nodes that are \emph{reachable} under $\rf$ (that is, the set of nodes that can be reached by choosing some $\st\in\rsts$ and playing $(\st,\rf)$). Formally, $\rable{\rf}$ is the minimal subset of $\ver$ in which:
\begin{itemize}
\item $\rot\in\rable{\rf}$.
\item For every $\anod\in\lns\cap\rable{\rf}$ we have $\ch{\anod}\subseteq\rable{\rf}$.
\item For every $\ac\in\nns\cap\rable{\rf}$ we have $\rf(\ac)\in\rable{\rf}$.
\end{itemize}
Given some $\rf\in\rfs$ we define $\ptn{\rf}$ to be the set of all nodes $\ac\in\acs\cap\rable{\rf}$ in which $\rf(\ac)\in\tns$.

For all $\ac\in\acs$ let:
\be
\vth(\ac):=\frac{\indi{\ac\in\rable{\stt{t}}}\lr\datp{t}(\ac)}{\ga\ime(\ac)+\cprob{\ac\in\rable{\stt{t}}}{\polt{t}}}\,.
\ee
Given $\st\in\rsts$, let $\zti{t}(\st)$ be the last node in $\acs$ that is encountered when $(\st,\rft{t})$ is played.

We now proceed to bound $\expt{\ln(\nrm{t}{\rot})}$. First note that by induction up $\acs$ we have, for all $\ac\in\acs$, that:
\be
\polt{t}(\ac)=\sum_{\st\in\rstn{\ac}}\prod_{\sac\in\sacs{\st}}\polt{t}(\sac)\,.
\ee
Using this equation and Equation \eqref{pseq1} we can prove, by induction up $\ran{\stt{t}}$, that for all $\anod\in\ran{\stt{t}}$ we have:
\be
\nrm{t}{\anod}=\sum_{\st\in\rstn{\anod}}\prod_{\ac\in\sacs{\st}}\polt{t}(\ac)\exp(-\vth(\ac))\,.
\ee
In particular, since $\rstn{\rot}=\rsts$ and:
\be
\prod_{\ac\in\sacs{\st}}\polt{t}(\ac)=\cprob{\stt{t}=\st}{\polt{t}}~~~~~\forall \st\in\rsts
\ee
we obtain:
\begin{equation}\label{pseq3}
\cexpt{\nrm{t}{\rot}}{\polt{t}} = \sum_{\st\in\rsts}\cprob{\stt{t}=\st}{\polt{t}}\cexpth{\exp\left(-\sum_{\ac\in\sacs{\st}}\vth(\ac)\right)}{\polt{t}}\,.
\end{equation}
Now consider any $\st\in\rsts$. We have that:
\be
\sum_{\ac\in\sacs{\st}}\frac{1}{\ime(\ac)}=1
\ee
so since the event $\event$ holds, we have:
\be
\sum_{\ac\in\sacs{\st}}\vth(\ac)\geq -1
\ee
which implies:
\be
\exp\left(-\sum_{\ac\in\sacs{\st}}\vth(\ac)\right)\leq 1 - \sum_{\ac\in\sacs{\st}}\vth(\ac) +(e-2)\sum_{\ac\in\sacs{\st}}\sum_{\ac'\in\sacs{\st}}\vth(\ac)\vth(\ac')\,.
\ee
Taking expectations and noting that:
\be
\frac{\cprob{\zti{t}(\st)\in\rable{\stt{t}}}{\polt{t}}\lr\loss{\st}{\rft{t}}}{\ga\ime(\zti{t}(\st))+\cprob{\zti{t}(\st)\in\rable{\stt{t}}}{\polt{t}}}\geq\lr\loss{\st}{\rft{t}}-\frac{\lr\ga\ime(\zti{t}(\st))}{\cprob{\zti{t}(\st)\in\rable{\stt{t}}}{\polt{t}}}
\ee
then gives us:
\begin{align}
\notag\cexpth{\exp\left(-\sum_{\ac\in\sacs{\st}}\vth(\ac)\right)}{\polt{t}}\leq& 1-\lr\loss{\st}{\rft{t}}+\frac{\lr\ga\ime(\zti{t}(\st))}{\cprob{\zti{t}(\st)\in\rable{\stt{t}}}{\polt{t}}}
\\
\label{pseq4}&+\frac{9(e-2)\lr^2}{\ep^2}\sum_{\ac\in\sacs{\st}}\frac{1}{\cprob{\ac\in\rable{\stt{t}}}{\polt{t}}}\,.
\end{align}

We will now simplify the result of substituting Equation \eqref{pseq4} into Equation \eqref{pseq3}. Note first that:
\begin{equation}\label{pseq5}
\sum_{\st\in\rsts}\cprob{\stt{t}=\st}{\polt{t}}\loss{\st}{\rft{t}}=\cexpt{\lost{t}}{\polt{t}}\,.
\end{equation}
We also have that:
\begin{align}
\notag\sum_{\st\in\rsts}\frac{\ime(\zti{t}(\st))\cprob{\stt{t}=\st}{\polt{t}}}{\cprob{\zti{t}(\st)\in\rable{\stt{t}}}{\polt{t}}}&=\sum_{\ac\in\ptn{\rft{t}}}\frac{\ime(\ac)}{\cprob{\ac\in\rable{\stt{t}}}{\polt{t}}}\sum_{\st\in\rsts}\indi{\ac=\zti{t}(\st)}\cprob{\stt{t}=\st}{\polt{t}}\\
&=\sum_{\ac\in\ptn{\rft{t}}}\ime(\ac)
\label{pseq6}=|\acs|
\end{align}
and that:
\begin{align}
\notag\sum_{\st\in\rsts}\sum_{\ac\in\sacs{\st}}\frac{\cprob{\stt{t}=\st}{\polt{t}}}{\cprob{\ac\in\rable{\stt{t}}}{\polt{t}}}&=\sum_{\ac\in\acs}\frac{1}{\cprob{\ac\in\rable{\stt{t}}}{\polt{t}}}\sum_{\st\in\rsts}\indi{\ac\in\sacs{\st}}\cprob{\stt{t}=\st}{\polt{t}}\\
&=\sum_{\ac\in\acs}1
\label{pseq7}=|\acs|\,.
\end{align}
Substituting Equation \eqref{pseq4} into Equation \eqref{pseq3} and then substituting in equations \eqref{pseq5}, \eqref{pseq6} and \eqref{pseq7} gives us:
\be
\cexpt{\nrm{t}{\rot}}{\polt{t}}\leq1-\lr\cexpt{\lost{t}}{\polt{t}}+\lr\ga|\acs|+\frac{9(e-2)\lr^2}{\ep^2}|\acs|
\ee
which implies:
\be
\expt{\ln(\nrm{t}{\rot})}\leq\lr\ga|\acs|+\frac{9(e-2)\lr^2}{\ep^2}|\acs|-\lr\expt{\lost{t}}\,.
\ee

Now that we have successfully bounded $\expt{\ln(\nrm{t}{\rot})}$ we can substitute it into Equation \eqref{pseq2}, giving us:
\begin{equation}\label{pseq8}
\expt{\delt{t}-\delt{t+1}}\leq\lr\ga|\acs|+\frac{9(e-2)\lr^2}{\ep^2}|\acs|-\lr\expt{\lost{t}}+ \lr\loss{\coms}{\rft{t}}
\end{equation}
as our bound on the expected progress.

Utilising Equation \eqref{pseq8} in a telescoping sum gives us:
\begin{align}
\notag\delt{1}&\leq\expt{\delt{1}-\delt{T+1}}=\sum_{t\in[T]}\expt{\delt{t}-\delt{t+1}}\\
\label{pseq9}&\leq\left(\lr\ga+\frac{9(e-2)\lr^2}{\ep^2}\right)|\acs|T-\lr\sum_{t\in[T]}\expt{\lost{t}}+\lr\sum_{t\in[T]}\loss{\coms}{\rft{t}}
\end{align}
and, by an inductive argument up the tree, we have:
\begin{equation}\label{pseq10}
\delt{1}=-\ln(|\rsts|)\,.
\end{equation}
Substituting Equation \eqref{pseq10} and the values of $\lr$ and $\ga$ into Equation \eqref{pseq9} and rearranging gives us Equation \eqref{pseq0}. This completes the proof sketch.

\bibliographystyle{plain}
\bibliography{bib}

@article{Dwork2006CalibratingNT,
  title={Calibrating Noise to Sensitivity in Private Data Analysis},
  author={Cynthia Dwork and Frank McSherry and Kobbi Nissim and Adam D. Smith},
  journal={J. Priv. Confidentiality},
  year={2006},
  volume={7},
  pages={17-51},
  url={https://api.semanticscholar.org/CorpusID:2468323}
}

@article{Kasiviswanathan2008WhatCW,
  title={What Can We Learn Privately?},
  author={Shiva Prasad Kasiviswanathan and Homin K. Lee and Kobbi Nissim and Sofya Raskhodnikova and Adam D. Smith},
  journal={2008 49th Annual IEEE Symposium on Foundations of Computer Science},
  year={2008},
  pages={531-540},
  url={https://api.semanticscholar.org/CorpusID:1935}
}

@article{Tossou2017AchievingPI,
  title={Achieving Privacy in the Adversarial Multi-Armed Bandit},
  author={Aristide C. Y. Tossou and Christos Dimitrakakis},
  journal={ArXiv},
  year={2017},
  volume={abs/1701.04222},
  url={https://api.semanticscholar.org/CorpusID:10674724}
}

@article{Asi2025FasterRF,
  title={Faster Rates for Private Adversarial Bandits},
  author={Hilal Asi and Vinod Raman and Kunal Talwar},
  journal={ArXiv},
  year={2025},
  volume={abs/2505.21790},
  url={https://api.semanticscholar.org/CorpusID:278959217}
}

@article{Balle2016DifferentiallyPP,
  title={Differentially Private Policy Evaluation},
  author={Borja Balle and Maziar Gomrokchi and Doina Precup},
  journal={ArXiv},
  year={2016},
  volume={abs/1603.02010},
  url={https://api.semanticscholar.org/CorpusID:88252}
}

@article{Garcelon2020LocalDP,
  title={Local Differentially Private Regret Minimization in Reinforcement Learning},
  author={Evrard Garcelon and Vianney Perchet and Ciara Pike-Burke and Matteo Pirotta},
  journal={ArXiv},
  year={2020},
  volume={abs/2010.07778},
  url={https://api.semanticscholar.org/CorpusID:222378459}
}

@inproceedings{Bai2024DifferentiallyPN,
  title={Differentially Private No-regret Exploration in Adversarial Markov Decision Processes},
  author={Shaojie Bai and Lanting Zeng and Chengcheng Zhao and Xiaoming Duan and Mohammad Sadegh Talebi and Peng Cheng and Jiming Chen},
  booktitle={Conference on Uncertainty in Artificial Intelligence},
  year={2024},
  url={https://api.semanticscholar.org/CorpusID:276992321}
}

@article{Farina2021BanditLO,
  title={Bandit Linear Optimization for Sequential Decision Making and Extensive-Form Games},
  author={Gabriele Farina and Robin Schmucker and Tuomas Sandholm},
  journal={ArXiv},
  year={2021},
  volume={abs/2103.04546},
  url={https://api.semanticscholar.org/CorpusID:232146858}
}

@article{Kozuno2021ModelFreeLF,
  title={Model-Free Learning for Two-Player Zero-Sum Partially Observable Markov Games with Perfect Recall},
  author={Tadashi Kozuno and Pierre M'enard and R{\'e}mi Munos and Michal Valko},
  journal={ArXiv},
  year={2021},
  volume={abs/2106.06279},
  url={https://api.semanticscholar.org/CorpusID:235417366}
}

@article{Bai2022NearOptimalLO,
  title={Near-Optimal Learning of Extensive-Form Games with Imperfect Information},
  author={Yunru Bai and Chi Jin and Song Mei and Tiancheng Yu},
  journal={ArXiv},
  year={2022},
  volume={abs/2202.01752},
  url={https://api.semanticscholar.org/CorpusID:246485834}
}

@inproceedings{Fiegel2022AdaptingTG,
  title={Adapting to game trees in zero-sum imperfect information games},
  author={C{\^o}me Fiegel and Pierre M'enard and Tadashi Kozuno and R{\'e}mi Munos and Vianney Perchet and Michal Valko},
  booktitle={International Conference on Machine Learning},
  year={2022},
  url={https://api.semanticscholar.org/CorpusID:255125421}
}

@article{Hoda2010SmoothingTF,
  title={Smoothing Techniques for Computing Nash Equilibria of Sequential Games},
  author={Samid Hoda and Andrew Gilpin and Javier F. Pe{\~n}a and Tuomas Sandholm},
  journal={Math. Oper. Res.},
  year={2010},
  volume={35},
  pages={494-512},
  url={https://api.semanticscholar.org/CorpusID:6932987}
}

@article{Maiti2025EfficientNA,
  title={Efficient Near-Optimal Algorithm for Online Shortest Paths in Directed Acyclic Graphs with Bandit Feedback Against Adaptive Adversaries},
  author={Arnab Maiti and Zhiyuan Fan and Kevin Jamieson and Lillian J. Ratliff and Gabriele Farina},
  journal={ArXiv},
  year={2025},
  volume={abs/2504.00461},
  url={https://api.semanticscholar.org/CorpusID:277467759}
}

@article{Pasteris2023NearestNW,
  title={Nearest Neighbour with Bandit Feedback},
  author={Stephen Pasteris and Chris Hicks and Vasilios Mavroudis},
  journal={ArXiv},
  year={2023},
  volume={abs/2306.13773},
  url={https://api.semanticscholar.org/CorpusID:259251744}
}

@inproceedings{Neu2015ExploreNM,
  title={Explore no more: Improved high-probability regret bounds for non-stochastic bandits},
  author={Gergely Neu},
  booktitle={Neural Information Processing Systems},
  year={2015},
  url={https://api.semanticscholar.org/CorpusID:5846129}
}

@article{Sato2025FastEA,
  title={Fast EXP3 Algorithms},
  author={Ryoma Sato and Shinji Ito},
  journal={ArXiv},
  year={2025},
  volume={abs/2512.11201},
  url={https://api.semanticscholar.org/CorpusID:283883783}
}

@article{Farina2021BetterRF,
  title={Better Regularization for Sequential Decision Spaces: Fast Convergence Rates for Nash, Correlated, and Team Equilibria},
  author={Gabriele Farina and Christian Kroer and Tuomas Sandholm},
  journal={Proceedings of the 22nd ACM Conference on Economics and Computation},
  year={2021},
  url={https://api.semanticscholar.org/CorpusID:235212559}
}


\appendix

\section{Analysis}\label{anasec}

We now prove Theorem \ref{mainth}.

\begin{lemma}\label{diffplem}
\alg\ is $\ep$-locally differentially private.
\end{lemma}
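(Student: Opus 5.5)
We fix a trial $t$, a reduced strategy $\sst\in\rsts$ and two environments $\srf,\srfh\in\rfs$, and condition on $\stt{t}=\sst$. This fixes the domain $\sacs{\sst}$ of $\data{t}$. Given $\rft{t}=\srf$, the play of $(\sst,\srf)$ is deterministic, so the last action $\clf$ and the loss $\loss{\sst}{\srf}\in[0,1]$ are fixed. Hence $\data{t}$ has the product density
\be
f_{\srf}(x)=\prod_{\ac\in\sacs{\sst}}\lap\bigl(x(\ac)-\indi{\ac=\clf}\loss{\sst}{\srf}\bigr)
\ee
on $\mathbb{R}^{\sacs{\sst}}$, since the noises $\noise{t}{\ac}$ are drawn independently of everything else. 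Similarly, with $\clfh$ the last action under $(\sst,\srfh)$, the density $f_{\srfh}$ is the same expression with $\clfh$ and $\loss{\sst}{\srfh}$ in place of $\clf$ and $\loss{\sst}{\srf}$.

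The plan is to compare each density with the pure-noise density $g(x)=\prod_{\ac}\lap(x(\ac))$, as in the proof sketch. Only one coordinate of $f_{\srf}$ differs from $g$, namely the one at $\clf$. For any $x\in\mathbb{R}$ and $c\in[0,1]$, the triangle inequality gives
\be
\frac{\lap(x-c)}{\lap(x)}=\exp\bigl(\tfrac{\ep}{2}(|x|-|x-c|)\bigr)\le\exp\bigl(\tfrac{\ep}{2}\bigr).
\ee
By the same argument the ratio $\lap(x)/\lap(x-c)$ is also at most $\exp(\ep/2)$. Therefore, pointwise, $f_{\srf}(x)\le\exp(\ep/2)\,g(x)$ and $g(x)\le\exp(\ep/2)\,f_{\srfh}(x)$. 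Combining the two gives $f_{\srf}\le\exp(\ep)\,f_{\srfh}$ everywhere. Integrating over any measurable $\datass$, restricted to the component of $\datas$ indexed by $\sacs{\sst}$, yields the required inequality.

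The only point needing care is that $\clf$ and $\clfh$ may differ. In that case a direct comparison of $f_{\srf}$ with $f_{\srfh}$ has two shifted coordinates. Passing through $g$ handles this cleanly, and it is why the Laplace parameter is $2/\ep$ rather than $1/\ep$: the total cost is two factors of $\exp(\ep/2)$.

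A minor measure-theoretic point is that $\datas$ is a disjoint union of Euclidean spaces indexed by the domains $\sacs{\sst}$. Conditioned on $\stt{t}=\sst$, all mass lies in a single component, so the event $\data{t}\in\datass$ reduces to $\data{t}\in\datass\cap\mathbb{R}^{\sacs{\sst}}$, and the density argument applies there.
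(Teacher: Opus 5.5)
Your proof is correct and follows essentially the same route as the paper. Conditioned on $\sigma_t$, both compare the law of $d_t$ under each environment to that of pure Laplace noise on $\mathcal{A}(\sigma_t)$ and pay a factor $\exp(\epsilon/2)$ on each side, which is what handles the case where the two last actions differ. Your pointwise product-density comparison is just a more explicit justification of the paper's step from the coordinate-wise Laplace-mechanism bounds to the joint bound over measurable $\mathcal{X}\subseteq\mathcal{D}$.
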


\begin{proof}
Take any $t\in[T]$. All expectations in this proof are conditioned on the value of $\stt{t}$. Draw a function $\cdata{t}:\sacs{\stt{t}}\rightarrow\mathbb{R}$ such that, for all $\ac\in\sacs{\stt{t}}$, we have that $\cdata{t}(\ac)$ is independently drawn from $\lap$.

Take any $\srf,\srfh\in\rfs$. Let $\clf$ be the last node in $\acs$ that is encountered when $(\stt{t},\srf)$ is played. Let $\clfh$ be the last node in $\acs$ that is encountered when $(\stt{t},\srfh)$ is played.

It is a classic result (from \cite{Dwork2006CalibratingNT}) that for any $\anum\in[0,1]$, some $\anum'$ drawn from $\lap$, and any measurable $\aset\subseteq\mathbb{R}$ we have:
\be
\exp\left(-\frac{\ep}{2}\right)\prob{\anum'\in\aset}\leq\prob{\anum+\anum'\in\aset}\leq\exp\left(\frac{\ep}{2}\right)\prob{\anum'\in\aset}\,.
\ee
We hence have, for any measurable $\aset\subseteq\mathbb{R}$, that:
\be
\cprob{\data{t}(\clf)\in\aset}{\rft{t}=\srf}\leq\exp\left(\frac{\ep}{2}\right)\prob{\cdata{t}(\clf)\in\aset}
\ee
and:
\be
\exp\left(-\frac{\ep}{2}\right)\prob{\cdata{t}(\clfh)\in\aset}\leq\cprob{\data{t}(\clfh)\in\aset}{\rft{t}=\srfh}\,.
\ee
We also have, for all $\ac\in\sacs{\stt{t}}\setminus\{\clf\}$ and any measurable $\aset\subseteq\mathbb{R}$, that:
\be
\cprob{\data{t}(\ac)\in\aset}{\rft{t}=\srf}=\prob{\cdata{t}(\ac)\in\aset}
\ee
and have, for all $\ac\in\sacs{\stt{t}}\setminus\{\clfh\}$ and any measurable $\aset\subseteq\mathbb{R}$, that:
\be
\prob{\cdata{t}(\ac)\in\aset}=\cprob{\data{t}(\ac)\in\aset}{\rft{t}=\srfh}\,.
\ee
Hence, for any measurable $\datass\subseteq\datas$ we have:
\be
\cprob{\data{t}\in\aset}{\rft{t}=\srf}\leq\exp\left(\frac{\ep}{2}\right)\prob{\cdata{t}\in\aset}
\ee
and:
\be
\exp\left(-\frac{\ep}{2}\right)\prob{\cdata{t}\in\aset}\leq\cprob{\data{t}\in\aset}{\rft{t}=\srfh}
\ee
so that:
\be
\cprob{\data{t}\in\aset}{\rft{t}=\srf}\leq\exp(\ep)\cprob{\data{t}\in\aset}{\rft{t}=\srfh}
\ee
as required.
\end{proof}

\begin{lemma}\label{comcomlem}
The per trial time complexity of \alg\ is in:
\be
\mathcal{O}\left(\max_{\st\in\rsts}\sum_{\anod\in\ran{\st}}\ln(|\ch{\anod}|)\right)\,.
\ee
\end{lemma}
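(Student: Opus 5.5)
We account for the work done on a single trial $t$ by splitting it into the server's sampling phase, the user's phase and the server's update phase, and charge each to the infosets in $\ran{\stt{t}}$. The target is the quantity $\sum_{\anod\in\ran{\stt{t}}}\ln(|\ch{\anod}|)$ plus lower-order terms. Since $\stt{t}\in\rsts$, this is at most the stated maximum. One caveat: we take $\max(1,\ln|\ch{\anod}|)$ as the per-infoset cost. This is harmless because $|\ch{\anod}|\geq 2$ gives $\ln(|\ch{\anod}|)\geq\ln 2$, so constant per-infoset overheads are absorbed into the $\mathcal{O}$.

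\emph{Sampling.} $\sample{\rot}$ calls $\sample{\anod}$ exactly once for each $\anod\in\ran{\stt{t}}$. This follows by induction from the minimal-set definition of $\rable{\stt{t}}$: the recursion descends from $\anod$ through $\stt{t}(\anod)$ into all of $\ch{\stt{t}(\anod)}\cap\lns$, which are precisely the reachable infosets. In each call we draw from the distribution $\polt{t}(\cdot)$ over $\ch{\anod}$. With a segment tree over $\ch{\anod}$ storing partial sums of the (unnormalised) weights, this draw takes $\mathcal{O}(\ln|\ch{\anod}|)$ time, as noted after the algorithm. The remaining overhead of a call is iterating over $\ch{\stt{t}(\anod)}\cap\lns$. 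Each such child is itself in $\ran{\stt{t}}$ and is therefore charged to itself. Leaf children are skipped, which requires storing for each action the list of its infoset children.

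\emph{User phase.} The user plays the game and draws one Laplace variable per action in $\sacs{\stt{t}}$. Since $|\sacs{\stt{t}}|=|\ran{\stt{t}}|$ (one chosen action per reachable infoset), this costs $\mathcal{O}(|\ran{\stt{t}}|)$. We treat the time to actually play the game as external, or as bounded by the path length, which is at most $|\ran{\stt{t}}|$.

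\emph{Update.} $\update{\rot}{1}$ is likewise called exactly once per $\anod\in\ran{\stt{t}}$. Each call computes $\wgt{t}{\anod}$ from one exponential and a product over children returns, which are again charged to the children. It then rescales the policy at $\anod$.

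The main obstacle is step 3, which naively touches all of $\ch{\anod}$. We avoid this by storing $\polt{t}$ implicitly as unnormalised segment-tree weights together with a normaliser. Multiplying the weight of $\stt{t}(\anod)$ by $\wgt{t}{\anod}$ realises exactly the update rule of step 3: the new normaliser is $1-(1-\wgt{t}{\anod})\polt{t}(\stt{t}(\anod))$ relative to the old one, and all other actions are renormalised automatically. This is a single point update in $\mathcal{O}(\ln|\ch{\anod}|)$. Reading $\polt{t}(\stt{t}(\anod))$ is $\mathcal{O}(1)$ given the stored leaf weight and root sum. Infosets outside $\ran{\stt{t}}$ are untouched, and the quantities $\ime(\cdot)$ are precomputed at initialisation.

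Summing the three phases gives $\mathcal{O}\big(\sum_{\anod\in\ran{\stt{t}}}\ln|\ch{\anod}|\big)$ per trial. This is bounded by the maximum over $\st\in\rsts$, which proves the lemma.
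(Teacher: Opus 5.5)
Your proposal is correct and is essentially the paper's argument: the sampling and update subroutines run only on infosets in $\ran{\stt{t}}$, each at cost $\mathcal{O}(\ln(|\ch{\anod}|))$ via a segment tree; the user-side work is $\mathcal{O}(|\ran{\stt{t}}|)$; and the total is bounded by the maximum over $\rsts$. You also justify points the paper only asserts, namely that step 3 is a single point update on unnormalised weights and that constant per-infoset overheads are absorbed since $|\ch{\anod}|\geq 2$, though you should also count, as the paper does, the $\mathcal{O}(|\ran{\stt{t}}|)$ cost of transmitting $\stt{t}$ and $\data{t}$.
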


\begin{proof}
Note first that the time required to send $\stt{t}$ to the user on trial $t$, and for the user to construct $\data{t}$ and send it back, is in:
\be
\mathcal{O}\left(\max_{\st\in\rsts}|\ran{\st}|\right)\,.
\ee
Further note that on each trial $t\in[T]$ we run $\sample{\anod}$ and $\update{\anod}{\cdot}$ only for nodes in $\ran{\stt{t}}$. So since we can implement the algorithm such that, for all $t\in[T]$ and $\anod\in\ran{\stt{t}}$, $\sample{\anod}$ and $\update{\anod}{\cdot}$ run in a time of $\mathcal{O}(\ln(|\ch{\anod}|))$, we have the result.
\end{proof}

We now turn to the proof of the regret bound. Without loss of generality we can assume that $T\geq|\acs|$ as the bound clearly holds otherwise. For all $t\in[T]$ and $\ac\in\acs\setminus\sacs{\stt{t}}$ let $\noise{t}{\ac}$ be a random number drawn independently from $\lap$. Let $\event$ be the event that:
\be
|\noise{t}{\ac}|\leq\frac{\ga}{\lr}
\ee
for all $t\in[T]$ and $\ac\in\acs$.

\begin{lemma}\label{eventlem}
We have:
\be
\expt{\sum_{t\in[T]}\lost{t}}\leq\cexpth{\sum_{t\in[T]}\lost{t}}{\event}+1\,.
\ee
\end{lemma}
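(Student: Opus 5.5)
Since every $\lost{t}\in[0,1]$, the sum $\sum_{t\in[T]}\lost{t}$ lies in $[0,T]$. I will split on $\event$ and its complement $\bar{\event}$:
\be
\expt{\sum_{t\in[T]}\lost{t}}=\prob{\event}\cexpth{\sum_{t\in[T]}\lost{t}}{\event}+\prob{\bar{\event}}\cexpth{\sum_{t\in[T]}\lost{t}}{\bar{\event}}\leq\cexpth{\sum_{t\in[T]}\lost{t}}{\event}+T\prob{\bar{\event}}\,.
\ee
It then suffices to show $\prob{\bar{\event}}\leq 1/T$.

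For the tail bound, take any fixed $t$ and $\ac$. Whether $\ac\in\sacs{\stt{t}}$ or not, conditionally on everything else $\noise{t}{\ac}$ is distributed according to $\lap$. Its tail is
\be
\prob{|\noise{t}{\ac}|>x}=\exp(-\ep x/2)\,.
\ee
With $x=\ga/\lr=6\ln(T)/\ep$ this gives $\exp(-3\ln T)=T^{-3}$.

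There are $T|\acs|$ pairs $(t,\ac)$, so a union bound gives $\prob{\bar{\event}}\leq |\acs|T\cdot T^{-3}=|\acs|T^{-2}$. Using the standing assumption $T\geq|\acs|$, this is at most $T^{-1}$. Hence $T\prob{\bar{\event}}\leq 1$, which proves the lemma.

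The only delicate point is the claim that each $\noise{t}{\ac}$ is marginally Laplace even though the set $\sacs{\stt{t}}$ is random. This is immediate from the construction: for $\ac\in\sacs{\stt{t}}$ the noise is drawn fresh by the user, and for $\ac\notin\sacs{\stt{t}}$ it is defined as an independent draw from $\lap$. Either way the draw is independent of the past, so the union bound applies directly.
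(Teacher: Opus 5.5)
Your proof is correct and follows the paper's argument essentially step for step: the Laplace tail gives $T^{-3}$ per pair at threshold $\gamma/\eta=6\ln(T)/\epsilon$, a union bound over the $T|\mathcal{A}|$ pairs together with $T\geq|\mathcal{A}|$ gives $1-\mathbb{P}[\mathcal{E}]\leq 1/T$, and splitting the expectation on $\mathcal{E}$ with losses in $[0,1]$ finishes it. The union bound only needs each $\phi_t(a)$ to be marginally Laplace, not independence, so your last paragraph proves more than is required.
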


\begin{proof}
Given $t\in[T]$ and $\ac\in\acs$ we have, since $\lap(\anum)=\lap(-\anum)$ for all $\anum\in\mathbb{R}$, that:
\begin{align*}
\prob{|\noise{t}{\ac}|>\frac{\ga}{\lr}}&=2\int_{\anum>\ga/\lr}\lap(\anum)\\
&=\frac{\ep}{2}\int_{\anum>\ga/\lr}\exp\left(-\frac{\ep}{2}\anum\right)\\
&=\exp\left(-\frac{\ep\ga}{2\lr}\right)\\
&=\exp(-3\ln(T))\\
&=\frac{1}{T^{3}}\,.
\end{align*}
By the union bound we then have:
\begin{align*}
1-\prob{\event}&=\prob{\exists (t,\ac)\in[T]\times\acs\,:\, |\noise{t}{\ac}|>\frac{\ga}{\lr}}\\
&\leq\sum_{t\in[T]}\sum_{\ac\in\acs}\prob{|\noise{t}{\ac}|>\frac{\ga}{\lr}}\\
&=\frac{T|\acs|}{T^{3}}\\
&\leq\frac{1}{T}\,.
\end{align*}
Since we always have:
\be
\sum_{t\in[T]}\lost{t}\leq T
\ee
we then have:
\begin{align*}
\expt{\sum_{t\in[T]}\lost{t}}&\leq\prob{\event}\cexpth{\sum_{t\in[T]}\lost{t}}{\event}+(1-\prob{\event})T\\
&\leq\cexpth{\sum_{t\in[T]}\lost{t}}{\event}+1
\end{align*}
as required.
\end{proof}
Let:
\be
\coms:=\operatorname{argmin}_{\st\in\rsts}\sum_{t\in[T]}\loss{\st}{\rft{t}}\,.
\ee
By Lemma \ref{eventlem} we now only need to prove that:
\be
\cexpth{\sum_{t\in[T]}\lost{t}}{\event}\leq\sum_{t\in[T]}\loss{\coms}{\rft{t}}+2\sqrt{\left(\frac{6\ln(T)}{\ep}+\frac{9(e-2)}{\ep^2}\right)|\acs|\ln(|\rsts|)T}
\ee
so from here on we will implicitly condition everything on the event $\event$. Note that for all $t\in[T]$ and $\ac\in\acs$ we have that $\noise{t}{\ac}$ is still independently drawn with mean $0$.

For any $t\in [T]$ we use the notation $\mathbb{E}_{\phi_{t}}$ to denote the expectation over only the draw of the function $\phi_{t}$ (conditioned on the event $\event$).

For all $t\in[T]$ define:
\be
\delt{t}=\sum_{\ac\in\sacs{\coms}}\ln(\polt{t}(\ac))\,.
\ee

\begin{lemma}\label{omegatlem}
For all $t\in[T]$ and $\anod\in\ran{\stt{t}}$ we have:
\be
\wgt{t}{\anod}=\exp\left(\frac{-\lr\data{t}(\stt{t}(\anod))}{\ga\ime(\stt{t}(\anod))+\cprob{\aac\in\rable{\stt{t}}}{\polt{t}}}\right)\prod_{\anod'\in\ch{\stt{t}(\anod)}\cap\lns}\nrm{t}{\anod'}
\ee
where $\aac:=\stt{t}(\anod)$.
\end{lemma}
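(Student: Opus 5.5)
Comparing this with the definition of $\wgt{t}{\anod}$ in step 2 of $\updt$, the only discrepancy is in the denominator: the algorithm uses $\polt{t}(\aac)\anum$, where $\anum$ is the second argument with which $\update{\anod}{\cdot}$ was invoked. It therefore suffices to show that whenever $\update{\anod}{\anum}$ is called on trial $t$ with $\anod\in\ran{\stt{t}}$, we have
\be
\polt{t}(\stt{t}(\anod))\anum=\cprob{\stt{t}(\anod)\in\rable{\stt{t}}}{\polt{t}}\,.
\ee
I will establish the stronger invariant that $\anum=\cprob{\anod\in\rable{\stt{t}}}{\polt{t}}$ for the node $\anod$ passed to $\updt$. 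The probability is over a fresh draw of $\stt{t}$ by $\samt$ under the fixed policy $\polt{t}$. The invariant is proved by induction down the tree along the recursion of $\updt$.

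For the base case, $\update{\rot}{1}$ is called and $\rot\in\rable{\st}$ for every reduced strategy, so the probability is $1$. For the inductive step, suppose $\update{\anod}{\anum}$ is called with $\anum=\cprob{\anod\in\rable{\stt{t}}}{\polt{t}}$. It calls $\update{\anod'}{\polt{t}(\stt{t}(\anod))\anum}$ for each $\anod'\in\ch{\stt{t}(\anod)}\cap\lns$. We therefore need two reachability facts. First, for an infoset $\anod$ and $\aac\in\ch{\anod}$, the node $\aac$ is reachable exactly when $\anod$ is reachable and $\samt$ chose $\aac$ at $\anod$. Second, for an action $\aac$ and $\anod'\in\ch{\aac}\cap\lns$, the node $\anod'$ is reachable exactly when $\aac$ is reachable.

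Both facts follow from the minimality in the definition of $\rable{\cdot}$, together with the fact that the tree has unique parents: a node enters $\rable{\st}$ only through its parent.

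Given these facts, the required probabilities follow from how $\samt$ draws actions. Conditional on $\anod$ being reached, the action at $\anod$ is drawn with distribution $\polt{t}$ independently of everything above it, so
\be
\cprob{\aac\in\rable{\stt{t}}}{\polt{t}}=\polt{t}(\aac)\cprob{\anod\in\rable{\stt{t}}}{\polt{t}}\,.
\ee
In particular, for $\aac=\stt{t}(\anod)$ the right-hand side is $\polt{t}(\stt{t}(\anod))\anum$, which is both the denominator term we need and the argument passed to each child. Since $\cprob{\anod'\in\rable{\stt{t}}}{\polt{t}}=\cprob{\aac\in\rable{\stt{t}}}{\polt{t}}$ by the second fact, the invariant propagates to every child $\anod'$. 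Applying the invariant at $\anod$ and substituting into step 2 of $\updt$ gives the lemma.

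The only delicate point is the conditional-independence step. Whether $\anod$ is reached depends only on the choices $\samt$ makes at strict ancestors of $\anod$, while the draw at $\anod$ is independent of those choices. This is also why I phrase the probability in terms of a fresh sample from $\polt{t}$ rather than the realised $\stt{t}$. The step is immediate from the recursive structure of $\samt$, so I expect no real obstacle.
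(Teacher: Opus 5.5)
Your proof is correct and follows the paper's argument. The paper also proves, by induction on the depth of $\anod$, that $\update{\anod}{\anum}$ is always invoked with $\anum=\cprob{\anod\in\rable{\stt{t}}}{\polt{t}}$, and then reads the lemma off step 2 of $\updt$; you simply spell out the two reachability facts and the conditional-independence step that the paper's one-line proof leaves implicit.
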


\begin{proof}
By induction on the depth of $\anod$ we have, when $\update{\anod}{\anum}$ is run, that:
\be
\anum = \cprob{\anod\in\rable{\stt{t}}}{\polt{t}}
\ee
from which the result follows.
\end{proof}

\begin{lemma}\label{lnpiopilem}
For all $t\in[T]$ and $\anod\in\ran{\coms}$ we have:
\begin{align*}
\ln\left(\frac{\polt{t+1}(\coms(\anod))}{\polt{t}(\coms(\anod))}\right) =& \frac{-\indi{\coms(\anod)\in\rable{\stt{t}}}\lr\data{t}(\coms(\anod))}{\ga\ime(\coms(\anod)) + \cprob{\coms(\anod)\in\rable{\stt{t}}}{\polt{t}}}-\indi{\anod\in\rable{\stt{t}}}\ln(\nrm{t}{\anod})\\
&+\sum_{\anod'\in\ch{\coms(\anod)}\cap\lns}\indi{\anod'\in\rable{\stt{t}}}\ln(\nrm{t}{\anod'})\,.
\end{align*}
\end{lemma}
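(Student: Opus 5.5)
We prove the identity by a case analysis on whether the infoset $\anod\in\ran{\coms}$ is reached by the sampled reduced strategy $\stt{t}$, and, if so, whether $\stt{t}$ agrees with $\coms$ there. Throughout, write $\ac:=\coms(\anod)$. We use the update rule of $\updt$ together with Lemma \ref{omegatlem}, which rewrites the denominator in $\wgt{t}{\anod}$ as $\ga\ime(\ac)+\cprob{\ac\in\rable{\stt{t}}}{\polt{t}}$. By step 4 of $\updt$, whenever $\anod\in\ran{\stt{t}}$ the return value is $\nrm{t}{\anod}=1-(1-\wgt{t}{\anod})\polt{t}(\stt{t}(\anod))$, and this is exactly the normaliser in step 3.

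\textbf{Case 1: $\anod\notin\rable{\stt{t}}$.} The policy at $\anod$ is retained, so the left-hand side is $0$. The first term on the right vanishes because $\ac\notin\rable{\stt{t}}$. The second term vanishes by its indicator. For the third term, each $\anod'\in\ch{\ac}\cap\lns$ is a descendant of $\anod$. Reachable sets are closed upward (a node's parent is reachable whenever the node is, by minimality of $\rable{\cdot}$), so $\anod'\notin\rable{\stt{t}}$ and the sum is zero.

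\textbf{Case 2: $\anod\in\rable{\stt{t}}$ and $\stt{t}(\anod)\neq\ac$.} Step 3 of $\updt$ gives $\polt{t+1}(\ac)=\polt{t}(\ac)/\nrm{t}{\anod}$, so the left-hand side equals $-\ln(\nrm{t}{\anod})$. The first term vanishes because the only action in $\ch{\anod}\cap\rable{\stt{t}}$ is $\stt{t}(\anod)$. The third term vanishes because children of $\ac$ are reachable only if $\ac$ is. Both sides therefore agree.

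\textbf{Case 3: $\anod\in\rable{\stt{t}}$ and $\stt{t}(\anod)=\ac$.} Step 3 gives $\polt{t+1}(\ac)=\wgt{t}{\anod}\polt{t}(\ac)/\nrm{t}{\anod}$, so the left-hand side equals $\ln(\wgt{t}{\anod})-\ln(\nrm{t}{\anod})$. Substituting Lemma \ref{omegatlem}, we have
\be
\ln(\wgt{t}{\anod})=\frac{-\lr\data{t}(\ac)}{\ga\ime(\ac)+\cprob{\ac\in\rable{\stt{t}}}{\polt{t}}}+\sum_{\anod'\in\ch{\ac}\cap\lns}\ln(\nrm{t}{\anod'})\,.
\ee
Every $\anod'\in\ch{\ac}\cap\lns$ lies in $\rable{\stt{t}}$ by the closure rule for actions, so all the indicators on the right-hand side equal $1$ and the expression matches.

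The only delicate points are the reachability bookkeeping and well-definedness. For the bookkeeping: $\rable{\stt{t}}$ is closed under taking parents, it contains all children of a reachable action, and it contains exactly one child of each reachable infoset. For well-definedness: the logarithms require $\nrm{t}{\anod}>0$ and $\wgt{t}{\anod}>0$. Positivity of $\wgt{t}{\anod}$ is immediate since it is a product of an exponential and positive returns. Positivity of $\nrm{t}{\anod}$ then follows because it is a convex combination $1-\polt{t}(\stt{t}(\anod))+\wgt{t}{\anod}\polt{t}(\stt{t}(\anod))$ with the weight placed on the positive quantity $\wgt{t}{\anod}$. Beyond these, the argument is direct substitution.
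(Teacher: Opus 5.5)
Your proof is correct and follows essentially the same route as the paper: you split on whether $\anod\in\rable{\stt{t}}$, read $\polt{t+1}(\coms(\anod))/\polt{t}(\coms(\anod))$ off steps 3--4 of $\updt$, substitute Lemma \ref{omegatlem}, and match the indicators using the closure properties of the reachable set. The only differences are that the paper handles your Cases 2 and 3 together through the factor $\indi{\coms(\anod)=\stt{t}(\anod)}$, and that your positivity check on $\wgt{t}{\anod}$ and $\nrm{t}{\anod}$ is a harmless addition the paper leaves implicit.
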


\begin{proof}
We have two cases - that either $\anod\in\rable{\stt{t}}$ or $\anod\notin\rable{\stt{t}}$.

First consider the case that $\anod\notin\rable{\stt{t}}$. In this case we have, directly from the algorithm, that $\polt{t+1}(\coms(\anod))=\polt{t}(\coms(\anod))$ so since we have $\coms(\anod)\notin\rable{\stt{t}}$, $\anod\notin\rable{\stt{t}}$ and $\anod'\notin\rable{\stt{t}}$ for all $\anod'\in\ch{\coms(\anod)}\cap\lns$, we have the result.

Now let us consider the case that $\anod\in\rable{\stt{t}}$. First note that directly from the algorithm we have:
\be
\nrm{t}{\anod} = 1-(1-\wgt{t}{\anod})\polt{t}(\stt{t}(\anod))
\ee
so that:
\be
\ln\left(\frac{\polt{t+1}(\coms(\anod))}{\polt{t}(\coms(\anod))}\right)=\indi{\coms(\anod)=\stt{t}(\anod)}\ln(\wgt{t}{\anod}) - \ln(\nrm{t}{\anod})\,.
\ee
The result then follows from lemma \ref{omegatlem} and noting that:
\be
\indi{\anod\in\rable{\stt{t}} \wedge \coms(\anod)=\stt{t}(\anod)} = \indi{\coms(\anod)\in\rable{\stt{t}}}
\ee
and:
\be
\indi{\anod\in\rable{\stt{t}} \wedge \coms(\anod)=\stt{t}(\anod)} = \indi{\anod'\in\rable{\stt{t}}}~~~~~ \forall \anod'\in\ch{\coms(\anod)}\cap\lns\,.
\ee
\end{proof}

\begin{lemma}\label{delmdellem1}
For all $t\in[T]$ we have:
\be
\expt{\delt{t}-\delt{t+1}}\leq \lr\loss{\coms}{\rft{t}}+\expt{\ln(\nrm{t}{\rot})}\,.
\ee
\end{lemma}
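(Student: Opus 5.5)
I will sum Lemma \ref{lnpiopilem} over all $\anod\in\ran{\coms}$. Since $\delt{t}-\delt{t+1}=-\sum_{\anod\in\ran{\coms}}\ln(\polt{t+1}(\coms(\anod))/\polt{t}(\coms(\anod)))$, and $\sacs{\coms}$ is in bijection with $\ran{\coms}$ via $\anod\mapsto\coms(\anod)$, the first terms of Lemma \ref{lnpiopilem} contribute
\be
\sum_{\ac\in\sacs{\coms}}\frac{\indi{\ac\in\rable{\stt{t}}}\lr\data{t}(\ac)}{\ga\ime(\ac)+\cprob{\ac\in\rable{\stt{t}}}{\polt{t}}}\,.
\ee
The remaining terms telescope. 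Every $\anod'\in\ran{\coms}\setminus\{\rot\}$ appears exactly once with a $+$ sign, as a child of $\coms(\pnt{\anod'})$, and once with a $-$ sign, as itself. Moreover, every $\anod'\in\ch{\coms(\anod)}\cap\lns$ with $\anod\in\ran{\coms}$ lies in $\ran{\coms}$. After cancellation only $-\indi{\rot\in\rable{\stt{t}}}\ln(\nrm{t}{\rot})=-\ln(\nrm{t}{\rot})$ survives. Negating gives
\be
\delt{t}-\delt{t+1}=\ln(\nrm{t}{\rot})+\sum_{\ac\in\sacs{\coms}}\frac{\indi{\ac\in\rable{\stt{t}}}\lr\data{t}(\ac)}{\ga\ime(\ac)+\cprob{\ac\in\rable{\stt{t}}}{\polt{t}}}\,.
\ee

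Next I take expectations of the sum term, first conditioning on $\polt{t}$ (equivalently on the history before trial $t$) and on $\stt{t}$. Then $\data{t}(\ac)=\indi{\ac=\lne{t}}\lost{t}+\noise{t}{\ac}$. Under the conditioning on $\event$, the $\noise{t}{\ac}$ are independent of $\stt{t}$ and symmetric about zero, since $\event$ truncates symmetrically. So $\exptn{t}{\noise{t}{\ac}}=0$ and the noise contributes nothing, because the denominators are deterministic given $\polt{t}$.

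For the signal part, I use that on the event $\ac\in\rable{\stt{t}}$ with $\ac=\lne{t}$ we have $\ac\in\sacs{\coms}$. Tracing the played path, $\ac$ is reached under $\coms$ and $\rft{t}$, and $\rft{t}(\ac)\in\tns$. So $\ac=\zti{t}(\coms)$ and $\lost{t}=\loss{\coms}{\rft{t}}$. Hence at most one action, namely $\zti{t}(\coms)$, contributes. Its contribution is at most
\be
\frac{\cprob{\zti{t}(\coms)\in\rable{\stt{t}}}{\polt{t}}\lr\loss{\coms}{\rft{t}}}{\ga\ime(\zti{t}(\coms))+\cprob{\zti{t}(\coms)\in\rable{\stt{t}}}{\polt{t}}}\leq\lr\loss{\coms}{\rft{t}}\,.
\ee
This uses $\ga\ime>0$ and $\loss{\coms}{\rft{t}}\ge0$. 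Taking total expectation yields the lemma.

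The main obstacle is the identification in the signal part. I must check that whenever $\ac\in\sacs{\coms}\cap\rable{\stt{t}}$ equals $\lne{t}$, it is exactly the terminal action of $(\coms,\rft{t})$. This follows because $\lne{t}$ is reached along the $\rft{t}$-path under $\stt{t}$, and every infoset on that path above $\ac$ is an ancestor of $\ac$. Since $\ac\in\rable{\coms}$ and the game has perfect recall (a tree), $\coms$ must choose the same actions there, so $\coms$ follows the same path. A secondary subtlety is that conditioning on $\event$ preserves the zero mean of the noise, which the paper already notes.
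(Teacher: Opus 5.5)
Your proposal is correct and follows essentially the same route as the paper. You telescope Lemma \ref{lnpiopilem} over $\ran{\coms}$ to get the exact identity for $\delt{t}-\delt{t+1}$, remove the noise using its zero mean (which survives conditioning on $\event$), and observe that only the terminal action $\zti{t}(\coms)$ of $(\coms,\rft{t})$ can carry signal, whose averaged term is at most $\lr\loss{\coms}{\rft{t}}$. The differences are only cosmetic: the paper also notes that $\zti{t}(\coms)\in\rable{\stt{t}}$ forces $\zti{t}(\coms)=\lne{t}$, which gives an equality before the final bound, and in your telescoping step the parent of $\anod'$ should read $\pnt{\anod'}=\coms(\pnt{\pnt{\anod'}})$ rather than $\coms(\pnt{\anod'})$.
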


\begin{proof}
Note first that:
\be
\bigcup_{\anod\in\ran{\coms}}\ch{\coms(\anod)}\cap\lns = \ran{\coms}\setminus\{\rot\}
\ee
and the sets in:
\be
\{\ch{\coms(\anod)}\cap\lns\,|\,\anod\in\ran{\coms}\}
\ee
are pairwise disjoint. So, since $\rot\in\rable{\stt{t}}$, we have:
\be
\sum_{\anod\in\ran{\coms}}\indi{\anod\in\rable{\stt{t}}}\ln(\nrm{t}{\anod})=\ln(\nrm{t}{\rot})+\sum_{\anod\in\ran{\coms}}\sum_{\anod'\in\ch{\coms(\anod)}\cap\lns}\indi{\anod'\in\rable{\stt{t}}}\ln(\nrm{t}{\anod'})
\ee
and hence, by Lemma \ref{lnpiopilem}, we have:
\begin{align}
\notag\delt{t}-\delt{t+1} &=-\sum_{\ac\in\sacs{\coms}}\ln\left(\frac{\polt{t+1}(\ac)}{\polt{t}(\ac)}\right)\\ 
\notag&=-\sum_{\anod\in\ran{\coms}}\ln\left(\frac{\polt{t+1}(\coms(\anod))}{\polt{t}(\coms(\anod))}\right)\\
\notag&=\ln(\nrm{t}{\rot}) + \sum_{\anod\in\ran{\coms}}\frac{\indi{\coms(\anod)\in\rable{\stt{t}}}\lr\data{t}(\coms(\anod))}{\ga\ime(\coms(\anod)) + \cprob{\coms(\anod)\in\rable{\stt{t}}}{\polt{t}}}\\
\label{delmdellem1eq1}&=\ln(\nrm{t}{\rot}) + \sum_{\ac\in\sacs{\coms}}\frac{\indi{\ac\in\rable{\stt{t}}}\lr\data{t}(\ac)}{\ga\ime(\ac) + \cprob{\ac\in\rable{\stt{t}}}{\polt{t}}}\,.
\end{align}
Let $\comz{t}$ be the last node in $\acs$ that is encountered when $(\coms, \rft{t})$ is played. For all $\ac\in\sacs{\coms}\setminus\{\comz{t}\}$ with $\ac\in\rable{\stt{t}}$ we must have that $\ac\neq\lne{t}$ so that $\exptn{t}{\data{t}(\ac)}=0$. Also, if $\comz{t}\in\rable{\stt{t}}$ then we must have that $\comz{t}=\lne{t}$ so that $\exptn{t}{\data{t}(\comz{t})}=\lost{t}$ which in turn (since $\comz{t}=\lne{t}$) is equal to $\loss{\coms}{\rft{t}}$. Hence, Equation \eqref{delmdellem1eq1} and linearity of expectation gives:
\be
\exptn{t}{\delt{t}-\delt{t+1}}=\exptn{t}{\ln(\nrm{t}{\rot})}+\frac{\indi{\comz{t}\in\rable{\stt{t}}}\lr\loss{\coms}{\rft{t}}}{\ga\ime(\comz{t}) + \cprob{\comz{t}\in\rable{\stt{t}}}{\polt{t}}}
\ee
which gives us:
\be
\cexpt{\delt{t}-\delt{t+1}}{\polt{t}}=\cexpt{\ln(\nrm{t}{\rot})}{\polt{t}}+\frac{\cprob{\comz{t}\in\rable{\stt{t}}}{\polt{t}}\lr\loss{\coms}{\rft{t}}}{\ga\ime(\comz{t}) + \cprob{\comz{t}\in\rable{\stt{t}}}{\polt{t}}}\,.
\ee
Since $\lr\loss{\coms}{\rft{t}}\geq0$ and $\ga\ime(\comz{t})>0$ we then have:
\be
\cexpt{\delt{t}-\delt{t+1}}{\polt{t}}\leq\cexpt{\ln(\nrm{t}{\rot})}{\polt{t}}+\lr\loss{\coms}{\rft{t}}
\ee
which implies the result.
\end{proof}

Given some $\anod\in\ver$, let $\des{\anod}$ be the set of nodes in $\lns$ that are descendants of $\anod$ and let $\stn{\anod}$ be the set of all functions $\st:\des{\anod}\rightarrow\acs$ with $\st(\anod')\in\ch{\anod'}$ for all $\anod'\in\des{\anod}$. We call the elements of such a set $\stn{\anod}$ \emph{sub-strategies}. As we did for strategies, we make the following definitions. Given some $\anod\in\ver$ and $\st\in\stn{\anod}$, define $\rable{\st}$ to be the minimal subset of $\ver$ in which:
\begin{itemize}
\item $\anod\in\rable{\st}$.
\item For every $\anod'\in\des{\anod}\cap\rable{\st}$ we have $\st(\anod')\in\rable{\st}$.
\item For every $\ac\in\nns\cap\rable{\st}$ we have $\ch{\ac}\subseteq\rable{\st}$.
\end{itemize}
Given, for some $\anod\in\ver$, a sub-strategy $\st\in\stn{\anod}$ we define its corresponding \emph{reduced sub-strategy} $\rst:\rable{\st}\cap\des{\anod}\rightarrow\acs$ to be such that $\rst(\anod'):=\st(\anod')$ for all $\anod'\in\rable{\st}\cap\des{\anod}$\,, and we define $\rable{\rst}:=\rable{\st}$. Given some $\anod\in\ver$, let $\rstn{\anod}$ be the set of all corresponding reduced sub-strategies of sub-strategies in $\stn{\anod}$, and for all $\st\in\rstn{\anod}$ define:
\be
\ran{\st}:=\lns\cap\rable{\st}~~~~~,~~~~~\sacs{\st}:=\acs\cap\rable{\st}\,.
\ee
Note that $\rstn{\rot}=\rsts$.

\begin{lemma}\label{mrnrlem}
We have:
\be
\msn(\rot)=|\acs|~~~~~,~~~~~
\nsts(\rot)=|\rsts|\,.
\ee
\end{lemma}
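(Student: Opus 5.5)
Both identities will be proved by induction up the tree, establishing for every node $\anod\in\lns\cup\acs$ a stronger local statement: $\msn(\anod)$ equals the number of nodes of $\acs$ that are descendants of $\anod$ (counting $\anod$ itself when $\anod\in\acs$), and $\nsts(\anod)=|\rstn{\anod}|$. Specialising to $\anod=\rot$ then gives the lemma, since every action is a descendant of the root and $\rstn{\rot}=\rsts$.

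For $\msn$ the induction is immediate. For $\ac\in\acs$, the action descendants of $\ac$ are $\ac$ itself together with the action descendants of each child infoset $\anod\in\ch{\ac}\cap\lns$. Leaf children contribute nothing, and these sets are disjoint because we are in a tree. This matches $\msn(\ac)=1+\sum_{\anod\in\ch{\ac}\cap\lns}\msn(\anod)$. For $\anod\in\lns$, the action descendants are the disjoint union over $\ac\in\ch{\anod}$ of the action descendants of $\ac$, which matches $\msn(\anod)=\sum_{\ac\in\ch{\anod}}\msn(\ac)$. The base case is an action all of whose children are leaves, where $\msn(\ac)=1$.

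For $\nsts$ I would exhibit explicit bijections. For $\ac\in\acs$ with child infosets $\anod_1,\dots,\anod_k$, I claim the map sending $\st\in\rstn{\ac}$ to the tuple of its restrictions to $\rable{\st}\cap\des{\anod_i}$ is a bijection onto $\prod_i\rstn{\anod_i}$. By the minimality definition of $\rable{\cdot}$ and the fact that all children of an action are reachable, $\rable{\st}=\{\ac\}\cup\bigcup_i\rable{\st_i}\cup(\ch{\ac}\cap\tns)$, where $\st_i$ is the sub-strategy restricted to $\des{\anod_i}$. The subtrees below distinct $\anod_i$ are disjoint, so reduced sub-strategies at $\anod_i$ can be chosen independently and glued back together. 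For $\anod\in\lns$, a reduced sub-strategy at $\anod$ is determined by the choice $\st(\anod)=\ac\in\ch{\anod}$ together with a reduced sub-strategy in $\rstn{\ac}$: the reachable set is $\{\anod\}\cup\rable{\st|_{\ac}}$, and nothing in the other subtrees $\ch{\anod}\setminus\{\ac\}$ is reachable. This gives a bijection $\rstn{\anod}\cong\bigsqcup_{\ac\in\ch{\anod}}\rstn{\ac}$. Both bijections match the recursions for $\nsts$. The base case is an action with only leaf children, which has exactly one (empty) reduced sub-strategy, matching the empty product $1$.

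The main obstacle is making the bijection claims rigorous with the paper's definitions: the reduced sub-strategy is defined as the restriction of an arbitrary full sub-strategy to its reachable set. Two checks are needed. First, surjectivity: any tuple of reduced sub-strategies can be extended arbitrarily to a full sub-strategy whose reduction is that tuple. Second, well-definedness: the reachable set of a sub-strategy depends only on its values on reachable nodes. I would prove the second as a small auxiliary claim by induction on depth, namely that two sub-strategies agreeing on $\rable{\st}\cap\des{\anod}$ have the same reachable set. After that, the gluing argument is routine bookkeeping.
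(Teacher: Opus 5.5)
Your proposal is correct and follows the same route as the paper. Both argue by induction up the tree that $m(v)$ counts the action-descendants of $v$ and that $n(v)=|\mathcal{S}(v)|$, and then specialise to the root $r$. The paper states this in a single line; your explicit bijections, and your check that the reachable set depends only on the values at reachable nodes, fill in details the paper leaves implicit.
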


\begin{proof}
By induction up the tree we have, for all $\anod\in\lns\cup\acs$\,, that $\msn(\anod)$ is the number of nodes in $\acs$ that are descendants of $\anod$, and that:
\be
\nsts(\anod)=|\rstn{\anod}|\,.
\ee
The result follows.
\end{proof}

\begin{lemma}\label{spiplem}
For all $\ac\in\acs$ we have:
\be
\sum_{\st\in\rstn{\ac}}\prod_{\ac'\in\sacs{\st}}\polt{t}(\ac') = \polt{t}(\ac)\,.
\ee
\end{lemma}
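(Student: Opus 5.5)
We prove the statement by induction up the tree, i.e. on the height of the subtree rooted at $\ac$. It is convenient to prove simultaneously the companion identity for infosets: for every $\anod\in\lns$,
\be
\sum_{\st\in\rstn{\anod}}\prod_{\ac'\in\sacs{\st}}\polt{t}(\ac')=1\,.
\ee
This is the expected form, because the empty product over a reduced sub-strategy with no further infosets gives $\polt{t}(\ac)$ alone at a bottom action, and the policy normalisation $\sum_{\ac\in\ch{\anod}}\polt{t}(\ac)=1$ should give $1$ at each infoset.

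The first step is a structural description of reduced sub-strategies, read off directly from the minimal-set definition of $\rable{\st}$.

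\emph{Action case.} For $\ac\in\acs$, a reduced sub-strategy $\st\in\rstn{\ac}$ corresponds bijectively to a tuple $(\st_{\anod'})_{\anod'\in\ch{\ac}\cap\lns}$ with $\st_{\anod'}\in\rstn{\anod'}$. Under this correspondence,
\be
\sacs{\st}=\{\ac\}\cup\bigcup_{\anod'\in\ch{\ac}\cap\lns}\sacs{\st_{\anod'}}\,,
\ee
and the union is disjoint since the subtrees are disjoint.

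\emph{Infoset case.} For $\anod\in\lns$, a reduced sub-strategy $\st\in\rstn{\anod}$ corresponds bijectively to a pair $(\ac,\st')$ with $\ac\in\ch{\anod}$ and $\st'\in\rstn{\ac}$, where $\sacs{\st}=\sacs{\st'}$.

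This is the same bijection that underlies $\nsts(\ac)=\prod\nsts(\anod')$ and $\nsts(\anod)=\sum\nsts(\ac)$ in Lemma \ref{mrnrlem}, so I would cite or replicate that reasoning.

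With the decomposition in hand, the induction is a distributivity computation.

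\emph{Action step.} For $\ac\in\acs$,
\be
\sum_{\st\in\rstn{\ac}}\prod_{\ac'\in\sacs{\st}}\polt{t}(\ac')=\polt{t}(\ac)\prod_{\anod'\in\ch{\ac}\cap\lns}\sum_{\st'\in\rstn{\anod'}}\prod_{\ac'\in\sacs{\st'}}\polt{t}(\ac')\,.
\ee
By the induction hypothesis each inner sum equals $1$, so the whole expression is $\polt{t}(\ac)$. The base case is when $\ch{\ac}\cap\lns=\emptyset$; then the product over $\anod'$ is empty and equals $1$.

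\emph{Infoset step.} For $\anod\in\lns$,
\be
\sum_{\st\in\rstn{\anod}}\prod_{\ac'\in\sacs{\st}}\polt{t}(\ac')=\sum_{\ac\in\ch{\anod}}\sum_{\st'\in\rstn{\ac}}\prod_{\ac'\in\sacs{\st'}}\polt{t}(\ac')=\sum_{\ac\in\ch{\anod}}\polt{t}(\ac)=1\,,
\ee
using the action-level statement for the children and then the fact that $\polt{t}\in\pols$.

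The only real obstacle is making the bijections rigorous, in particular checking that the reachable set of a combined sub-strategy is exactly the disjoint union of the reachable sets of its components. This follows because the defining closure rules act independently in disjoint subtrees, and because the unreached infosets are exactly what the reduction discards, so distinct tuples give distinct reduced sub-strategies. Once this is settled, the remaining algebra is immediate. Finally, $\polt{t}\in\pols$ for every $t$ should be noted: it holds for $\polt{1}$ since $\sum_{\ac\in\ch{\anod}}\nsts(\ac)=\nsts(\anod)$, and it is preserved by $\updt$ because step 3 of $\updt$ normalises each infoset's distribution.
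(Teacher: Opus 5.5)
Your proposal is correct and follows essentially the same route as the paper: induction up the tree, first obtaining the infoset identity $\sum_{\st\in\rstn{\anod}}\prod_{\ac'\in\sacs{\st}}\polt{t}(\ac')=1$ by splitting over $\ch{\anod}$ and using the normalisation of $\polt{t}$, then factorising over $\ch{\ac}\cap\lns$ at the action. The only differences are presentational: the paper takes just the action-level claim as its induction hypothesis and derives the infoset identity inline, and it uses the sum/product decompositions of $\rstn{\cdot}$ and the fact that $\polt{t}\in\pols$ implicitly, whereas you state them explicitly.
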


\begin{proof}
We prove by induction up the tree. Hence, we can assume that for all $\anod\in\ch{\ac}\cap\lns$ and for all $\sac\in\ch{\anod}$ we have:
\be
\sum_{\st\in\rstn{\sac}}\prod_{\ac'\in\sacs{\st}}\polt{t}(\ac') = \polt{t}(\sac)\,.
\ee
Note that for all $\anod\in\ch{\ac}\cap\lns$ we have:
\begin{align*}
\sum_{\st\in\rstn{\anod}}\prod_{\ac'\in\sacs{\st}}\polt{t}(\ac')&=\sum_{\sac\in\ch{\anod}}\sum_{\st\in\rstn{\sac}}\prod_{\ac'\in\sacs{\st}}\polt{t}(\ac')\\
&=\sum_{\sac\in\ch{\anod}}\polt{t}(\sac)\\
&=1
\end{align*}
and hence:
\begin{align*}
\sum_{\st\in\rstn{\ac}}\prod_{\ac'\in\sacs{\st}}\polt{t}(\ac')&=\polt{t}(\ac)\prod_{\anod\in\ch{\ac}\cap\lns}\sum_{\st\in\rstn{\anod}}\prod_{\ac'\in\sacs{\st}}\polt{t}(\ac')\\
&=\polt{t}(\ac)\prod_{\anod\in\ch{\ac}\cap\lns}1\\
&=\polt{t}(\ac)
\end{align*}
as required.
\end{proof}

For all $t\in[T]$ and $\ac\in\acs$ let:
\be
\vth(\ac):=\frac{\indi{\ac\in\rable{\stt{t}}}\lr\datp{t}(\ac)}{\ga\ime(\ac)+\cprob{\ac\in\rable{\stt{t}}}{\polt{t}}}\,.
\ee

\begin{lemma}\label{spethlem}
For all $t\in[T]$ we have:
\be
\nrm{t}{\rot} = \sum_{\st\in\rsts}\cprob{\stt{t}=\st}{\polt{t}}\exp\left(-\sum_{\ac\in\sacs{\st}}\vth(\ac)\right)\,.
\ee
\end{lemma}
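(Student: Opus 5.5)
We prove, by induction up $\ran{\stt{t}}$ (from the deepest infosets toward $\rot$), the stronger claim that for every $\anod\in\ran{\stt{t}}$:
\be
\nrm{t}{\anod}=\sum_{\st\in\rstn{\anod}}\prod_{\ac\in\sacs{\st}}\polt{t}(\ac)\exp(-\vth(\ac))\,.
\ee
The lemma is the case $\anod=\rot$. There we use $\rstn{\rot}=\rsts$ and the fact that, since $\samt$ draws each action independently, $\prod_{\ac\in\sacs{\st}}\polt{t}(\ac)=\cprob{\stt{t}=\st}{\polt{t}}$ for every $\st\in\rsts$.

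The first ingredient is structural. A reduced sub-strategy from an infoset $\anod$ consists of a choice of $\sac\in\ch{\anod}$ together with a reduced sub-strategy in $\rstn{\sac}$. A reduced sub-strategy from an action $\sac$ is $\sac$ itself together with an independent choice of reduced sub-strategy from each $\anod'\in\ch{\sac}\cap\lns$. The action sets are disjoint unions: $\sacs{\st}=\{\sac\}\cup\bigcup_{\anod'}\sacs{\st_{\anod'}}$. This is the same decomposition used in the proof of Lemma \ref{spiplem}, and it makes the sum of products factorise.

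For the inductive step, fix $\anod\in\ran{\stt{t}}$ and write $\aac:=\stt{t}(\anod)$. Split the sum over $\st\in\rstn{\anod}$ according to the first action $\sac=\st(\anod)$.

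\emph{Case $\sac\neq\aac$.} Every action in $\rstn{\sac}$ is a descendant of an infoset that is not reachable under $\stt{t}$, so it lies outside $\rable{\stt{t}}$. Hence $\vth$ vanishes on all those actions. The contribution is then $\sum_{\st\in\rstn{\sac}}\prod_{\ac\in\sacs{\st}}\polt{t}(\ac)=\polt{t}(\sac)$ by Lemma \ref{spiplem}.

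\emph{Case $\sac=\aac$.} The contribution factorises as
\be
\polt{t}(\aac)\exp(-\vth(\aac))\prod_{\anod'\in\ch{\aac}\cap\lns}\sum_{\st\in\rstn{\anod'}}\prod_{\ac\in\sacs{\st}}\polt{t}(\ac)\exp(-\vth(\ac))\,.
\ee
Each $\anod'\in\ch{\aac}\cap\lns$ belongs to $\ran{\stt{t}}$, so by the inductive hypothesis every inner sum equals $\nrm{t}{\anod'}$. Since $\aac\in\rable{\stt{t}}$, Lemma \ref{omegatlem} gives $\exp(-\vth(\aac))\prod_{\anod'}\nrm{t}{\anod'}=\wgt{t}{\anod}$. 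The contribution is therefore $\polt{t}(\aac)\wgt{t}{\anod}$.

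Adding the two cases gives
\be
\sum_{\sac\neq\aac}\polt{t}(\sac)+\polt{t}(\aac)\wgt{t}{\anod}=1-(1-\wgt{t}{\anod})\polt{t}(\aac)\,,
\ee
which is exactly the return value $\nrm{t}{\anod}$ of $\update{\anod}{\cdot}$. The base case, an infoset whose chosen action has no infoset children, is the same computation with an empty product.

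The main point to check carefully is the bijection between $\rstn{\anod}$ and the pairs (first action, reduced sub-strategy from that action). In particular, the product over $\ch{\aac}\cap\lns$ must be over disjoint, independent sub-choices. The rest is bookkeeping that matches $\vth$ with the exponent in Lemma \ref{omegatlem}.
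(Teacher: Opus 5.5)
Your proposal is correct and follows essentially the same route as the paper. It uses the same strengthened induction up $\ran{\stt{t}}$ for $\nrm{t}{\anod}=\sum_{\st\in\rstn{\anod}}\prod_{\ac\in\sacs{\st}}\polt{t}(\ac)\exp(-\vth(\ac))$ and the same split of $\ch{\anod}$: the chosen action $\aac$ is handled by Lemma \ref{omegatlem} plus the inductive hypothesis, and every other $\sac$ by Lemma \ref{spiplem}, since $\vth$ vanishes there. One wording fix in the case $\sac\neq\aac$: $\sac$ itself is a child of the reachable infoset $\anod$, so the right justification is that $\sac\notin\rable{\stt{t}}$ because $\sac\neq\stt{t}(\anod)$, and therefore nothing below $\sac$ is reachable either.
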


\begin{proof}
For all $\anod\in\ver$ and $\st\in\rstn{\anod}$ define:
\be
\ssw{t}(\st):=\prod_{\ac\in\sacs{\st}}\polt{t}(\ac)\exp(-\vth(\ac))\,.
\ee
We first prove, by induction up $\ran{\stt{t}}$, that for all $\anod\in\ran{\stt{t}}$ we have:
\be
\nrm{t}{\anod} = \sum_{\st\in\rstn{\anod}}\ssw{t}(\st)\,.
\ee
To prove this by induction we can assume that for all $\anod'\in\ch{\stt{t}(\anod)}\cap\lns$ we have:
\be
\nrm{t}{\anod'} = \sum_{\st\in\rstn{\anod'}}\ssw{t}(\st)\,.
\ee
Let $\sac:=\stt{t}(\anod)$. From Lemma \ref{omegatlem} we have, since $\sac\in\rable{\stt{t}}$, that:
\begin{align*}
\polt{t}(\sac)\wgt{t}{\anod}&=\polt{t}(\sac)\exp\left(\frac{-\lr\data{t}(\sac)}{\ga\ime(\sac)+\cprob{\sac\in\rable{\stt{t}}}{\polt{t}}}\right)\prod_{\anod'\in\ch{\sac}\cap\lns}\nrm{t}{\anod'}\\
&=\polt{t}(\sac)\exp(-\vth(\sac))\prod_{\anod'\in\ch{\sac}\cap\lns}\sum_{\st\in\rstn{\anod'}}\ssw{t}(\st)\\
&=\sum_{\st\in\rstn{\sac}}\ssw{t}(\st)\,.
\end{align*}
Now take any $\sac'\in\ch{v}\setminus\{\sac\}$. Note that for any $\st\in\rstn{\sac'}$ and any $\ac\in\sacs{\st}$ we have that $\ac\notin\rable{\stt{t}}$ and hence, by Lemma \ref{spiplem}, we have:
\begin{align*}
\polt{t}(\sac')
&=\sum_{\st\in\rstn{\sac'}}\prod_{\ac\in\sacs{\st}}\polt{t}(\ac)\\
&=\sum_{\st\in\rstn{\sac'}}\prod_{\ac\in\sacs{\st}}\polt{t}(\ac)\exp(-\vth(\ac))\\
&=\sum_{\st\in\rstn{\sac'}}\ssw{t}(\st)\,.
\end{align*}
We have hence shown that:
\begin{align*}
\nrm{t}{\anod}&=1-(1-\wgt{t}{\anod})\polt{t}(\sac)\\
&=\sum_{\ac\in\ch{\anod}}(\indi{\ac=\sac}\wgt{t}{\anod}\polt{t}(\sac)+\indi{\ac\neq\sac}\polt{t}(\ac))\\
&=\sum_{\ac\in\ch{\anod}}\sum_{\st\in\rstn{\ac}}\ssw{t}(\st)\\
&=\sum_{\st\in\rstn{\anod}}\ssw{t}(\st)
\end{align*}
as required.

We have now shown that the inductive hypothesis holds for all $\anod\in\ran{\stt{t}}$. In particular, we have:
\begin{align*}
\nrm{t}{\rot} &= \sum_{\st\in\rstn{\rot}}\ssw{t}(\st)\\
&=\sum_{\st\in\rsts}\ssw{t}(\st)\\
&=\sum_{\st\in\rsts}\left(\prod_{\ac\in\sacs{\st}}\polt{t}(\ac)\right)\exp\left(-\sum_{\ac\in\sacs{\st}}\vth(\ac)\right)\\
&=\sum_{\st\in\rsts}\cprob{\stt{t}=\st}{\polt{t}}\exp\left(-\sum_{\ac\in\sacs{\st}}\vth(\ac)\right)
\end{align*}
as required.
\end{proof}

\nc{\acd}[2]{\mathcal{A}'(#1,#2)}

\begin{lemma}\label{sooimelem}
For all $\st\in\rsts$ we have:
\be
\sum_{\ac\in\sacs{\st}}\frac{1}{\ime(\ac)}=1\,.
\ee
\end{lemma}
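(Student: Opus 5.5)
We prove a more general statement by induction up the tree, working with reduced sub-strategies rather than full reduced strategies.

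\textbf{Generalised claim.} For every $\anod\in\lns$ and every $\st\in\rstn{\anod}$ we have
\be
\sum_{\ac\in\sacs{\st}}\frac{\ime(\pnt{\anod})}{\ime(\ac)}\cdot\frac{1}{\msn(\anod)}=1\,,
\ee
interpreting $\ime(\pnt{\rot})/\msn(\rot)$ as $\ime(\rot)=1$. More cleanly, the claim says that $\sum_{\ac\in\sacs{\st}}1/\ime(\ac)=1/\ime(\anod)$ for every $\anod\in\lns$ and every $\st\in\rstn{\anod}$. Since $\ime(\rot)=1$ and $\rstn{\rot}=\rsts$, this gives the lemma.

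\textbf{Structure of a reduced sub-strategy.} Fix $\anod\in\lns$ and $\st\in\rstn{\anod}$, and let $\ac:=\st(\anod)$. By the definition of reachability, $\sacs{\st}$ is the disjoint union of $\{\ac\}$ and the sets $\sacs{\st_{\anod'}}$ for $\anod'\in\ch{\ac}\cap\lns$. Here $\st_{\anod'}\in\rstn{\anod'}$ is the restriction of $\st$ to the nodes reachable from $\anod'$. This decomposition is immediate from the minimal-set definition, and disjointness holds because the subtrees are disjoint.

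\textbf{Inductive step.} Assume the claim holds for all infosets below $\anod$; the base case is an $\anod$ whose chosen action has no infoset children, which is covered by the same computation with an empty sum. Then
\be
\sum_{\sac\in\sacs{\st}}\frac{1}{\ime(\sac)}=\frac{1}{\ime(\ac)}+\sum_{\anod'\in\ch{\ac}\cap\lns}\frac{1}{\ime(\anod')}\,.
\ee
Now substitute the recursive definitions. For $\anod'\in\ch{\ac}\cap\lns$ we have $\ime(\anod')=\ime(\ac)/\msn(\anod')$, so $1/\ime(\anod')=\msn(\anod')/\ime(\ac)$. The right-hand side therefore equals
\be
\frac{1+\sum_{\anod'\in\ch{\ac}\cap\lns}\msn(\anod')}{\ime(\ac)}=\frac{\msn(\ac)}{\ime(\ac)}=\frac{\msn(\ac)}{\msn(\ac)\ime(\anod)}=\frac{1}{\ime(\anod)}\,,
\ee
using $\msn(\ac)=1+\sum_{\anod'\in\ch{\ac}\cap\lns}\msn(\anod')$ and $\ime(\ac)=\msn(\ac)\ime(\anod)$.

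The only point needing care is the set decomposition of $\sacs{\st}$ in the second step; once it is in place, the computation above telescopes directly.
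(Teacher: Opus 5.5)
Your proof is correct and is essentially the paper's argument. It uses the same inductive hypothesis $\sum 1/\ime(\ac)=1/\ime(\anod)$ and the same computation via $\ime(\anod')=\ime(\ac)/\msn(\anod')$ and $\msn(\ac)=1+\sum_{\anod'}\msn(\anod')$. The only difference is bookkeeping: the paper fixes $\st\in\rsts$ and inducts over $\anod\in\ran{\st}$ using the subsets $\acd{\st}{\anod}$ of $\sacs{\st}$ that descend from $\anod$, which avoids your restriction of $\st$ to reduced sub-strategies in $\rstn{\anod'}$.
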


\begin{proof}
Given some $\anod\in\ran{\st}$ let $\acd{\st}{\anod}$ be the set of all nodes in $\sacs{\st}$ that are descendants of $\anod$. We take the inductive hypothesis that for any $\anod\in\ran{\st}$ we have:
\be
\sum_{\ac\in\acd{\st}{\anod}}\frac{1}{\ime(\ac)}=\frac{1}{\ime(\anod)}\,.
\ee
To prove that the inductive hypothesis holds, we need only show that it holds whenever, for all $\anod'\in\ch{\st(\anod)}\cap\lns$, we have:
\be
\sum_{\ac\in\acd{\st}{\anod'}}\frac{1}{\ime(\ac)}=\frac{1}{\ime(\anod')}\,.
\ee
So assume this. We then have:
\begin{align*}
\sum_{\ac\in\acd{\st}{\anod}}\frac{1}{\ime(\ac)}&=\frac{1}{\ime(\st(\anod))}+\sum_{\anod'\in\ch{\st(\anod)}\cap\lns}\sum_{\ac\in\acd{\st}{\anod'}}\frac{1}{\ime(\ac)}\\
&=\frac{1}{\ime(\st(\anod))}+\sum_{\anod'\in\ch{\st(\anod)}\cap\lns}\frac{1}{\ime(\anod')}\\
&=\frac{1}{\ime(\st(\anod))}+\sum_{\anod'\in\ch{\st(\anod)}\cap\lns}\frac{\msn(\anod')}{\ime(\st(\anod))}\\
&=\frac{1}{\ime(\st(\anod))}\left(1+\sum_{\anod'\in\ch{\st(\anod)}\cap\lns}\msn(\anod')\right)\\
&=\frac{\msn(\st(\anod))}{\ime(\st(\anod))}\\
&=\frac{1}{\ime(\anod)}
\end{align*}
as required. We have now shown that the inductive hypothesis holds for all $\anod\in\ran{\st}$ and hence that it holds for $\anod=\rot$ which gives us the result.
\end{proof}

Given some $t\in[T]$ and $\st\in\rsts$, let $\zti{t}(\st)$ be the last node in $\acs$ that is encountered when $(\st,\rft{t})$ is played.

\begin{lemma}\label{exlnnrmlem}
For all $t\in[T]$ and $\st\in\rsts$ we have:
\begin{align*}
\cexpth{\exp\left(-\sum_{\ac\in\sacs{\st}}\vth(\ac)\right)}{\polt{t}}\leq& 1-\lr\loss{\st}{\rft{t}}+\frac{\lr\ga\ime(\zti{t}(\st))}{\cprob{\zti{t}(\st)\in\rable{\stt{t}}}{\polt{t}}}
\\
&+\frac{9(e-2)\lr^2}{\ep^2}\sum_{\ac\in\sacs{\st}}\frac{1}{\cprob{\ac\in\rable{\stt{t}}}{\polt{t}}}\,.
\end{align*}
\end{lemma}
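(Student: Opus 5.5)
We fix $t$ and $\st\in\rsts$, condition on $\polt{t}$ (and on $\event$), and write $X:=\sum_{\ac\in\sacs{\st}}\vth(\ac)$. The plan has four steps: show $X\geq-1$, apply a quadratic upper bound on $\exp(-X)$, compute the expectation of the linear term, and bound the expectation of the quadratic term.

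\textbf{Step 1: $X\geq-1$.} On $\event$, every $\ac\in\sacs{\st}\cap\rable{\stt{t}}$ satisfies
\be
\data{t}(\ac)\geq\noise{t}{\ac}\geq-\ga/\lr,
\ee
since $\lost{t}\geq0$. Dropping the nonnegative probability term from the denominator gives $\vth(\ac)\geq-\lr(\ga/\lr)/(\ga\ime(\ac))=-1/\ime(\ac)$. This bound also holds trivially when $\vth(\ac)=0$. Summing and invoking Lemma \ref{sooimelem} yields $X\geq-1$.

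\textbf{Step 2: quadratic bound.} For $y\leq 1$ we have $e^{y}\leq 1+y+(e-2)y^2$, since $(e^y-1-y)/y^2$ is increasing and equals $e-2$ at $y=1$. Applying this with $y=-X$ gives
\be
\exp(-X)\leq 1-X+(e-2)X^2,
\ee
and it remains to take conditional expectations over the draw of $\stt{t}$ and the noise.

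\textbf{Step 3: the linear term.} The noise is symmetric, and the truncation in $\event$ is symmetric and independent across coordinates, so the noise keeps mean zero. Hence only $\ac=\zti{t}(\st)$ contributes to $\mathbb{E}[X]$: when $\zti{t}(\st)\in\rable{\stt{t}}$, it equals $\lne{t}$ and $\lost{t}=\loss{\st}{\rft{t}}$, by the same argument as in Lemma \ref{delmdellem1}. Writing $p:=\cprob{\zti{t}(\st)\in\rable{\stt{t}}}{\polt{t}}$, this gives
\be
\mathbb{E}[X]=\frac{p\,\lr\loss{\st}{\rft{t}}}{\ga\ime(\zti{t}(\st))+p}.
\ee
Using $\frac{p}{c+p}=1-\frac{c}{c+p}\geq 1-\frac{c}{p}$ with $c=\ga\ime(\zti{t}(\st))$, together with $\loss{\st}{\rft{t}}\leq1$, we get
\be
-\mathbb{E}[X]\leq-\lr\loss{\st}{\rft{t}}+\frac{\lr\ga\ime(\zti{t}(\st))}{p}.
\ee

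\textbf{Step 4: the quadratic term (main obstacle).} We expand
\be
X^2=\sum_{\ac,\ac'\in\sacs{\st}}\vth(\ac)\vth(\ac'),
\ee
and must show $(e-2)\mathbb{E}[X^2]\leq\frac{9(e-2)\lr^2}{\ep^2}\sum_{\ac}1/p_\ac$, where $p_\ac:=\cprob{\ac\in\rable{\stt{t}}}{\polt{t}}$.

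For the diagonal terms we write $\data{t}(\ac)=\indi{\ac=\lne{t}}\lost{t}+\noise{t}{\ac}$. The second moment of the truncated Laplace noise is at most $8/\ep^2$, and the deterministic part adds at most $1\leq1/\ep^2$. This gives $\mathbb{E}[\data{t}(\ac)^2]\leq 9/\ep^2$, and therefore
\be
\mathbb{E}[\vth(\ac)^2]\leq\frac{\lr^2\,p_\ac\,(9/\ep^2)}{p_\ac^2}=\frac{9\lr^2}{\ep^2 p_\ac}.
\ee

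For the cross terms $\ac\neq\ac'$, the noises are independent with mean zero, so any product involving a single noise factor vanishes. The only surviving contribution is $\indi{\ac=\lne{t}}\indi{\ac'=\lne{t}}\lost{t}^2$, which is zero because $\lne{t}$ is a single node.

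I expect this step to be the hardest. The delicate points are verifying that truncation preserves both the zero mean and the $8/\ep^2$ second-moment bound, and fitting the deterministic contribution within the constant $9$, which needs $\ep<1$. If the cross-term cancellation needs more care, I would instead bound $X^2\leq|\sacs{\st}|\sum_\ac\vth(\ac)^2$; however, that loses a factor, so I will rely on the independence argument.

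Combining Steps 2, 3 and 4 gives the stated bound.
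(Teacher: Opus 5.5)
Your proposal is correct and follows the paper's proof essentially step for step. Both arguments use the lower bound $\sum_{a}\vartheta_t(a)\geq -1$ via Lemma \ref{sooimelem}, then the inequality $e^{-x}\leq 1-x+(e-2)x^2$ for $x\geq -1$, then vanishing cross terms (at most one of two distinct actions can be the last action $z_t$, and the other carries independent mean-zero noise), then the $(1+8/\epsilon^2)\leq 9/\epsilon^2$ bound on the diagonal terms, and finally the same rearrangement of the linear term using $\Lambda\leq 1$. Your two extra remarks fill in points the paper leaves implicit: why the quadratic bound on the exponential holds, and that conditioning the Laplace noise on the truncation event can only lower its second moment.
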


\begin{proof}

Note that since everything is conditioned on the event $\event$ we have that:
\be
\datp{t}(\ac)\geq -\ga/\lr
\ee
for all $\ac\in\sacs{\stt{t}}$. Hence, by Lemma \ref{sooimelem}, we have that:
\begin{align*}
\sum_{\ac\in\sacs{\st}}\vth(\ac)&\geq\sum_{\ac\in\sacs{\st}}\frac{-\ga}{\ga\ime(\ac)+\cprob{\ac\in\rable{\stt{t}}}{\polt{t}}}\\
&\geq\sum_{\ac\in\sacs{\st}}\frac{-1}{\ime(\ac)}\\
&= -1
\end{align*}
so since:
\be
\exp(-\anum)\leq 1 - \anum + (e-2)\anum^2~~~~~~~\forall x\geq-1
\ee
we have:
\begin{equation}\label{exlnnrmlemeq1}
\exp\left(-\sum_{\ac\in\sacs{\st}}\vth(\ac)\right)\leq 1 - \sum_{\ac\in\sacs{\st}}\vth(\ac) +(e-2)\sum_{\ac\in\sacs{\st}}\sum_{\ac'\in\sacs{\st}}\vth(\ac)\vth(\ac')\,.
\end{equation}
For any distinct $\ac,\ac'\in\sacs{\st}$ with $\ac,\ac'\in\rable{\stt{t}}$ we have that either $\ac$ or $\ac'$ is not equal to $\lne{t}$. Without loss of generality, assume then that $\ac\neq\lne{t}$. Then we must have that, under the draw of $\nosf{t}$, $\datp{t}(\ac)$ has mean zero and is independent of $\datp{t}(\ac')$ which means that:
\be
\exptn{t}{\vth(\ac)\vth(\ac')}=0\,.
\ee
It is a well known result that $\lap$ has variance $8/\ep^2$. Hence, for any $\ac\in\sacs{\st}$ with $\ac\in\rable{\stt{t}}$ we have that:
\begin{align*}
\exptn{t}{\vth(\ac)^2}&\leq\frac{\lr^2(1+8/\ep^2)}{(\ga\ime(\ac)+\cprob{\ac\in\rable{\stt{t}}}{\polt{t}})^2}\\
&\leq\frac{9\lr^2}{\ep^2\cprob{\ac\in\rable{\stt{t}}}{\polt{t}}^2}\,.
\end{align*}
Note now that if $\zti{t}(\st)\in\rable{\stt{t}}$ then $\lne{t}=\zti{t}(\st)$ so: 
\be
\lost{t}=\loss{\st}{\rft{t}}
\ee
and that for all $\ac\in\sacs{\st}\setminus\{\zti{t}(\st)\}$ we have $\ac\neq\lne{t}$.
So for all $\ac\in\sacs{\st}$ with $\ac\in\rable{\stt{t}}$, we have:
\be
\exptn{t}{\vth(\ac)}=\frac{\indi{\ac=\zti{t}(\st)}\lr\loss{\st}{\rft{t}}}{\ga\ime(\zti{t}(\st))+\cprob{\zti{t}(\st)\in\rable{\stt{t}}}{\polt{t}}}\,.
\ee
Substituting all these results into equation \eqref{exlnnrmlemeq1} (after taking expectations over $\nosf{t}$) gives us:
\begin{align*}
\exptn{t}{\exp\left(-\sum_{\ac\in\sacs{\st}}\vth(\ac)\right)}\leq& 1 - \frac{\indi{\zti{t}(\st)\in\rable{\stt{t}}}\lr\loss{\st}{\rft{t}}}{\ga\ime(\zti{t}(\st))+\cprob{\zti{t}(\st)\in\rable{\stt{t}}}{\polt{t}}}\\
&+\frac{9(e-2)\lr^2}{\ep^2}\sum_{\ac\in\sacs{\st}}\frac{\indi{\ac\in\rable{\stt{t}}}}{\cprob{\ac\in\rable{\stt{t}}}{\polt{t}}^2}
\end{align*}
which further gives us:
\begin{align*}
\cexpth{\exp\left(-\sum_{\ac\in\sacs{\st}}\vth(\ac)\right)}{\polt{t}}\leq& 1-\frac{\cprob{\zti{t}(\st)\in\rable{\stt{t}}}{\polt{t}}\lr\loss{\st}{\rft{t}}}{\ga\ime(\zti{t}(\st))+\cprob{\zti{t}(\st)\in\rable{\stt{t}}}{\polt{t}}}\\
&+\frac{9(e-2)\lr^2}{\ep^2}\sum_{\ac\in\sacs{\st}}\frac{1}{\cprob{\ac\in\rable{\stt{t}}}{\polt{t}}}\,.
\end{align*}
Noting that:
\begin{align*}
\lr\loss{\st}{\rft{t}}-\frac{\cprob{\zti{t}(\st)\in\rable{\stt{t}}}{\polt{t}}\lr\loss{\st}{\rft{t}}}{\ga\ime(\zti{t}(\st))+\cprob{\zti{t}(\st)\in\rable{\stt{t}}}{\polt{t}}}&=\frac{\lr\ga\ime(\zti{t}(\st))\loss{\st}{\rft{t}}}{\ga\ime(\zti{t}(\st))+\cprob{\zti{t}(\st)\in\rable{\stt{t}}}{\polt{t}}}\\
&\leq\frac{\lr\ga\ime(\zti{t}(\st))}{\cprob{\zti{t}(\st)\in\rable{\stt{t}}}{\polt{t}}}
\end{align*}
the result follows.
\end{proof}

Given some $\rf\in\rfs$ we define $\rable{\rf}$ to be the set of nodes that are \emph{reachable} under $\rf$ (that is, the set of nodes that can be reached by choosing some $\st\in\rsts$ and playing $(\st,\rf)$). Formally, $\rable{\rf}$ is the minimal subset of $\ver$ in which:
\begin{itemize}
\item $\rot\in\rable{\rf}$.
\item For every $\anod\in\lns\cap\rable{\rf}$ we have $\ch{\anod}\subseteq\rable{\rf}$.
\item For every $\ac\in\nns\cap\rable{\rf}$ we have $\rf(\ac)\in\rable{\rf}$.
\end{itemize}
Given some $\rf\in\rfs$ we define $\ptn{\rf}$ to be the set of all nodes $\ac\in\acs\cap\rable{\rf}$ in which $\rf(\ac)\in\tns$.

\begin{lemma}\label{imestalem}
For all $\rf\in\rfs$ we have:
\be
\sum_{\ac\in\ptn{\rf}}\ime(\ac)=|\acs|\,.
\ee
\end{lemma}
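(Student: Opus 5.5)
We argue by induction up the tree, with a statement localised to each node of $\rable{\rf}$. For $\anod\in\lns\cap\rable{\rf}$ let $\ptnp{\rf}{\anod}$ denote the set of $\ac\in\ptn{\rf}$ that are descendants of $\anod$. For $\ac\in\acs\cap\rable{\rf}$ let $\ptnp{\rf}{\ac}$ denote the set of $\ac'\in\ptn{\rf}$ that are descendants of $\ac$, including $\ac$ itself. The hypothesis we prove is
\be
\sum_{\ac'\in\ptnp{\rf}{\anod}}\ime(\ac')=\msn(\anod)\ime(\anod)~~~\text{and}~~~\sum_{\ac'\in\ptnp{\rf}{\ac}}\ime(\ac')=\frac{\msn(\ac)\ime(\ac)}{\msn(\ac)}\cdot\msn(\ac)\cdot\frac{1}{\msn(\ac)}\,.
\ee
The action half simplifies to $\ime(\ac)$, but we do not impose it directly; we derive the correct form below. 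At the root the node identity gives $\msn(\rot)\ime(\rot)=\msn(\rot)=|\acs|$ by Lemma \ref{mrnrlem}, which is the claim.

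\textbf{Infoset step.} Take $\anod\in\lns\cap\rable{\rf}$. All children of $\anod$ are in $\rable{\rf}$, and the sets $\ptnp{\rf}{\ac}$ for $\ac\in\ch{\anod}$ partition $\ptnp{\rf}{\anod}$. So it suffices that each action $\ac$ contributes $\msn(\ac)\ime(\pnt{\ac})$. Summing these contributions gives $\ime(\anod)\sum_{\ac\in\ch{\anod}}\msn(\ac)=\msn(\anod)\ime(\anod)$. The action half of the hypothesis should therefore read $\sum_{\ac'\in\ptnp{\rf}{\ac}}\ime(\ac')=\ime(\ac)$, since $\ime(\ac)=\msn(\ac)\ime(\pnt{\ac})$.

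\textbf{Action step.} Take $\ac\in\acs\cap\rable{\rf}$. Exactly one child, $\rf(\ac)$, is in $\rable{\rf}$, and there are two cases.
\begin{itemize}
\item If $\rf(\ac)\in\tns$, then $\ac\in\ptn{\rf}$ and has no action descendants in $\rable{\rf}$. The sum is then just $\ime(\ac)$, as required.
\item If $\rf(\ac)=\anod'\in\lns$, then $\ac\notin\ptn{\rf}$ and $\ptnp{\rf}{\ac}=\ptnp{\rf}{\anod'}$. By the infoset hypothesis the sum equals $\msn(\anod')\ime(\anod')=\ime(\pnt{\anod'})=\ime(\ac)$, using $\ime(\anod')=\ime(\pnt{\anod'})/\msn(\anod')$ for $\anod'\neq\rot$.
\end{itemize}

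So the clean hypothesis is:
\be
\sum_{\ac'\in\ptnp{\rf}{\ac}}\ime(\ac')=\ime(\ac)~~~\text{and}~~~\sum_{\ac'\in\ptnp{\rf}{\anod}}\ime(\ac')=\msn(\anod)\ime(\anod)\,.
\ee
The main point needing care is bookkeeping rather than depth. First, descendants of $\anod$ that lie in $\ptn{\rf}$ are exactly those reachable through a child action of $\anod$ and then along the $\rf$-path, so the partition claims are justified by the minimality of $\rable{\rf}$. Second, in the action step only the single child $\rf(\ac)$ is reachable, which is what makes the sum collapse to one term; the asymmetry with Lemma \ref{sooimelem}, where infosets had a single chosen action and actions had all children, is exactly what produces $\msn$ instead of $1$.
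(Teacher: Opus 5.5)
Your proof is correct and follows essentially the same route as the paper: an induction up $\rable{\rf}$ showing $\sum_{\ac'\in\ptnp{\rf}{\ac}}\ime(\ac')=\ime(\ac)$ for reachable actions. The only difference is that you state the infoset case separately as $\msn(\anod)\ime(\anod)$, whereas the paper inlines it into the action step and the root computation. One presentational fix: the action half of your first display is a convoluted expression that simply equals $\ime(\ac)$, so state the clean hypothesis from the outset rather than ``deriving'' it afterwards.
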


\begin{proof}
Given some $\ac\in\rable{\rf}\cap\acs$ let $\ptnp{\rf}{\ac}$ be the set of all nodes in $\ptn{\rf}$ that are descendants of $\ac$. We take the inductive hypothesis that for any $\ac\in\rable{\rf}\cap\acs$ we have:
\be
\sum_{\sac\in\ptnp{\rf}{\ac}}\ime(\sac)=\ime(\ac)\,.
\ee
To prove that the inductive hypothesis holds, we need only show that it holds whenever, for all $\ac'\in\ch{\rf(\ac)}$\,, we have:
\be
\sum_{\sac\in\ptnp{\rf}{\ac'}}\ime(\sac)=\ime(\ac')\,.
\ee
So assume this. We have two cases - that either $\rf(\ac)\in\tns$ or $\rf(\ac)\notin\tns$. In the first case we have $\rf(\ac)\in\tns$ which implies that $\ptnp{\rf}{\ac}=\{\ac\}$, proving the inductive hypothesis. Now let us consider the case that $\rf(\ac)\notin\tns$. In this case we have:
\begin{align*}
\sum_{\sac\in\ptnp{\rf}{\ac}}\ime(\sac)&=\sum_{\ac'\in\ch{\rf(\ac)}}\sum_{\sac\in\ptnp{\rf}{\ac'}}\ime(\sac)\\
&=\sum_{\ac'\in\ch{\rf(\ac)}}\ime(\ac')\\
&=\sum_{\ac'\in\ch{\rf(\ac)}}\msn(\ac')\ime(\rf(\ac))\\
&=\ime(\rf(\ac))\sum_{\ac'\in\ch{\rf(\ac)}}\msn(\ac')\\
&=\ime(\rf(\ac))\msn(\rf(\ac))\\
&=\ime(\ac)
\end{align*}
as required.

We have now shown that the inductive hypothesis holds for all $\ac\in\rable{\rf}\cap\acs$ and hence:
\begin{align*}
\sum_{\ac\in\ptn{\rf}}\ime(\ac)&=\sum_{\ac\in\ch{\rot}}\sum_{\sac\in\ptnp{\rf}{\ac}}\ime(\sac)\\
&=\sum_{\ac\in\ch{\rot}}\ime(\ac)\\
&=\sum_{\ac\in\ch{\rot}}\msn(\ac)\ime(\rot)\\
&=\sum_{\ac\in\ch{\rot}}\msn(\ac)\\
&=\msn(\rot)\,.
\end{align*}
The result then follows from Lemma \ref{mrnrlem}.
\end{proof}

\begin{lemma}\label{exptrlem}
For all $t\in[T]$ we have:
\be
\expt{\ln(\nrm{t}{\rot})}\leq\lr\ga|\acs|+\frac{9(e-2)\lr^2}{\ep^2}|\acs|-\lr\expt{\lost{t}}\,.
\ee
\end{lemma}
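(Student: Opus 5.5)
The plan is to take Lemma \ref{spethlem} as the starting point. It gives $\nrm{t}{\rot}$ as a mixture over $\st\in\rsts$ with weights $\cprob{\stt{t}=\st}{\polt{t}}$. First I condition on $\polt{t}$ and take the conditional expectation over the draw of $\stt{t}$ and $\nosf{t}$ (still under $\event$). Since the weights $\cprob{\stt{t}=\st}{\polt{t}}$ are deterministic given $\polt{t}$, linearity gives
\be
\cexpt{\nrm{t}{\rot}}{\polt{t}}=\sum_{\st\in\rsts}\cprob{\stt{t}=\st}{\polt{t}}\cexpth{\exp\left(-\sum_{\ac\in\sacs{\st}}\vth(\ac)\right)}{\polt{t}}\,.
\ee
I then plug in the bound of Lemma \ref{exlnnrmlem} for each $\st$ and treat the four resulting terms separately.

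The first two terms are direct. The constant term gives $1$, because the weights sum to one. The loss term gives $-\lr\sum_{\st}\cprob{\stt{t}=\st}{\polt{t}}\loss{\st}{\rft{t}}=-\lr\cexpt{\lost{t}}{\polt{t}}$, since $\lost{t}=\loss{\stt{t}}{\rft{t}}$.

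For the $\ga$ term I regroup the sum according to $\ac=\zti{t}(\st)$. Every such last action lies in $\ptn{\rft{t}}$, and for each $\ac\in\ptn{\rft{t}}$ the event $\zti{t}(\stt{t})=\ac$ coincides with $\ac\in\rable{\stt{t}}$. This holds because if $\ac$ is reachable under $\stt{t}$ and $\rft{t}$ maps $\ac$ to a leaf, then play against $\rft{t}$ passes through $\ac$ and ends there. So
\be
\sum_{\st}\indi{\ac=\zti{t}(\st)}\cprob{\stt{t}=\st}{\polt{t}}=\cprob{\ac\in\rable{\stt{t}}}{\polt{t}}\,,
\ee
which cancels the denominator. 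The term therefore becomes $\lr\ga\sum_{\ac\in\ptn{\rft{t}}}\ime(\ac)=\lr\ga|\acs|$ by Lemma \ref{imestalem}.

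For the variance term I swap the order of summation. Since $\sum_{\st}\indi{\ac\in\sacs{\st}}\cprob{\stt{t}=\st}{\polt{t}}=\cprob{\ac\in\rable{\stt{t}}}{\polt{t}}$, each action contributes exactly $1$, giving $\frac{9(e-2)\lr^2}{\ep^2}|\acs|$. One caveat: if some action had zero probability of being reachable it would drop out of both sides, so the bound still holds. Under the algorithm all $\polt{t}$ are positive anyway.

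Collecting the terms gives
\be
\cexpt{\nrm{t}{\rot}}{\polt{t}}\leq 1-\lr\cexpt{\lost{t}}{\polt{t}}+\lr\ga|\acs|+\frac{9(e-2)\lr^2}{\ep^2}|\acs|\,.
\ee
To finish, I apply Jensen's inequality ($\ln$ is concave), so $\cexpt{\ln(\nrm{t}{\rot})}{\polt{t}}\leq\ln\cexpt{\nrm{t}{\rot}}{\polt{t}}$, and then $\ln(1+x)\leq x$. The right-hand side above must be positive for the logarithm to make sense, and it is, since it upper-bounds the positive quantity $\cexpt{\nrm{t}{\rot}}{\polt{t}}$. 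Taking the outer expectation over $\polt{t}$ yields the lemma.

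The main obstacle is the regrouping in the $\ga$ term: establishing that $\zti{t}(\st)\in\ptn{\rft{t}}$ and that $\zti{t}(\st)=\ac$ holds if and only if $\ac\in\rable{\st}$ for $\ac\in\ptn{\rft{t}}$. This needs perfect recall and the tree structure, because at most one action of $\ptn{\rft{t}}$ can be reachable under a given $\st$. I would argue it by following the unique play path of $(\st,\rft{t})$.
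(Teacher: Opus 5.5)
Your proposal is correct and follows the paper's proof essentially step for step. You combine Lemmas \ref{spethlem} and \ref{exlnnrmlem}, regroup the $\gamma$-term over the last actions in $\mathcal{Z}(\mu_t)$ and apply Lemma \ref{imestalem}, and swap the order of summation in the variance term. The only difference is cosmetic: the paper applies $\ln(x)\leq x-1$ pointwise to $\psi_t(r)$ before taking the conditional expectation, instead of using Jensen's inequality and then $\ln(1+x)\leq x$, which spares you the positivity check.
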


\begin{proof}
By lemmas \ref{spethlem} and \ref{exlnnrmlem} we have:
\begin{align}
\notag\cexpt{\nrm{t}{\rot}}{\polt{t}}&=\sum_{\st\in\rsts}\cprob{\stt{t}=\st}{\polt{t}}\cexpth{\exp\left(-\sum_{\ac\in\sacs{\st}}\vth(\ac)\right)}{\polt{t}}\\
\notag&\leq\sum_{\st\in\rsts}\cprob{\stt{t}=\st}{\polt{t}}-\lr\sum_{\st\in\rsts}\cprob{\stt{t}=\st}{\polt{t}}\loss{\st}{\rft{t}}\\
\notag&~~~~+\lr\ga\sum_{\st\in\rsts}\frac{\ime(\zti{t}(\st))\cprob{\stt{t}=\st}{\polt{t}}}{\cprob{\zti{t}(\st)\in\rable{\stt{t}}}{\polt{t}}}\\
\label{exptrlemeq1}&~~~~+\frac{9(e-2)\lr^2}{\ep^2}\sum_{\st\in\rsts}\sum_{\ac\in\sacs{\st}}\frac{\cprob{\stt{t}=\st}{\polt{t}}}{\cprob{\ac\in\rable{\stt{t}}}{\polt{t}}}\,.
\end{align}
First note that:
\begin{equation}\label{exptrlemeq2}
\sum_{\st\in\rsts}\cprob{\stt{t}=\st}{\polt{t}}=1
\end{equation}
and:
\begin{equation}\label{exptrlemeq3}
\sum_{\st\in\rsts}\cprob{\stt{t}=\st}{\polt{t}}\loss{\st}{\rft{t}}=\cexpt{\lost{t}}{\polt{t}}\,.
\end{equation}
Next note that for all $\st\in\rsts$ we have $\zti{t}(\st)\in\ptn{\rft{t}}$. Note also that for all $\ac\in\ptn{\rft{t}}$ we have $\ac=\zti{t}(\stt{t})$ if and only if $\ac\in\rable{\stt{t}}$. Hence, by Lemma \ref{imestalem}, we have that:
\begin{align}
\notag\sum_{\st\in\rsts}\frac{\ime(\zti{t}(\st))\cprob{\stt{t}=\st}{\polt{t}}}{\cprob{\zti{t}(\st)\in\rable{\stt{t}}}{\polt{t}}}&=\sum_{\ac\in\acs}\sum_{\st\in\rsts}\frac{\indi{\ac=\zti{t}(\st)}\ime(\ac)\cprob{\stt{t}=\st}{\polt{t}}}{\cprob{\ac\in\rable{\stt{t}}}{\polt{t}}}\\
\notag&=\sum_{\ac\in\ptn{\rft{t}}}\frac{\ime(\ac)}{\cprob{\ac\in\rable{\stt{t}}}{\polt{t}}}\sum_{\st\in\rsts}\indi{\ac=\zti{t}(\st)}\cprob{\stt{t}=\st}{\polt{t}}\\
\notag&=\sum_{\ac\in\ptn{\rft{t}}}\frac{\ime(\ac)\cprob{\ac=\zti{t}(\stt{t})}{\polt{t}}}{\cprob{\ac\in\rable{\stt{t}}}{\polt{t}}}\\
\notag&=\sum_{\ac\in\ptn{\rft{t}}}\ime(\ac)\\
\label{exptrlemeq4}&=|\acs|\,.
\end{align}
Finally, note that:
\begin{align}
\notag\sum_{\st\in\rsts}\sum_{\ac\in\sacs{\st}}\frac{\cprob{\stt{t}=\st}{\polt{t}}}{\cprob{\ac\in\rable{\stt{t}}}{\polt{t}}}&=\sum_{\ac\in\acs}\frac{1}{\cprob{\ac\in\rable{\stt{t}}}{\polt{t}}}\sum_{\st\in\rsts}\indi{\ac\in\sacs{\st}}\cprob{\stt{t}=\st}{\polt{t}}\\
\notag&=\sum_{\ac\in\acs}\frac{\cprob{\ac\in\sacs{\stt{t}}}{\polt{t}}}{\cprob{\ac\in\rable{\stt{t}}}{\polt{t}}}\\
\notag&=\sum_{\ac\in\acs}1\\
\label{exptrlemeq5}&=|\acs|\,.
\end{align}
Substituting equations \eqref{exptrlemeq2}, \eqref{exptrlemeq3}, \eqref{exptrlemeq4} and \eqref{exptrlemeq5} into Equation \eqref{exptrlemeq1} gives us:
\be
\cexpt{\nrm{t}{\rot}}{\polt{t}}\leq1-\lr\cexpt{\lost{t}}{\polt{t}}+\lr\ga|\acs|+\frac{9(e-2)\lr^2}{\ep^2}|\acs|\,.
\ee
Since $\ln(\anum)\leq \anum - 1$ for all $\anum>0$, we then have that:
\begin{align*}
\cexpt{\ln(\nrm{t}{\rot})}{\polt{t}}&\leq\cexpt{\nrm{t}{\rot}}{\polt{t}}-1\\
&\leq\lr\ga|\acs|+\frac{9(e-2)\lr^2}{\ep^2}|\acs|-\lr\cexpt{\lost{t}}{\polt{t}}
\end{align*}
which implies the result.
\end{proof}

\begin{lemma}\label{deltolem}
We have:
\be
\delt{1}=-\ln(|\rsts|)\,.
\ee
\end{lemma}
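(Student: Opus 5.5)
We need $\delt{1}=\sum_{\ac\in\sacs{\coms}}\ln(\polt{1}(\ac))=-\ln(|\rsts|)$. Since $\polt{1}(\ac)=\nsts(\ac)/\nsts(\pnt{\ac})$, the plan is to show by induction up the tree that the product of initial policy weights over the actions of any reduced sub-strategy telescopes to $1/\nsts$ of its root. Then we apply this at $\rot$ with $\coms$, and finish with Lemma \ref{mrnrlem}, which gives $\nsts(\rot)=|\rsts|$.

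\textbf{Inductive claim.} For every $\anod\in\lns$ and every $\st\in\rstn{\anod}$,
\be
\prod_{\ac\in\sacs{\st}}\polt{1}(\ac)=\frac{1}{\nsts(\anod)}\,.
\ee
We prove this by induction up the tree, i.e.\ on the height of $\anod$.

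\textbf{Inductive step.} Fix $\anod$ and $\st\in\rstn{\anod}$, and let $\sac:=\st(\anod)$.
\begin{itemize}
\item The set $\sacs{\st}$ decomposes as $\{\sac\}$ together with the disjoint union, over $\anod'\in\ch{\sac}\cap\lns$, of $\sacs{\st_{\anod'}}$. Here $\st_{\anod'}\in\rstn{\anod'}$ is the restriction of $\st$ to the descendants of $\anod'$.
\item This is the same decomposition used implicitly in Lemma \ref{spiplem} and in the proof of Lemma \ref{sooimelem}, where the sum over $\acd{\st}{\anod}$ was split the same way.
\item By the inductive hypothesis the product becomes
\be
\frac{\nsts(\sac)}{\nsts(\anod)}\prod_{\anod'\in\ch{\sac}\cap\lns}\frac{1}{\nsts(\anod')}\,.
\ee
\item By the definition $\nsts(\sac)=\prod_{\anod'\in\ch{\sac}\cap\lns}\nsts(\anod')$, this equals $1/\nsts(\anod)$.
\end{itemize}

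\textbf{Base case.} The base case is contained in the step: if $\ch{\sac}\cap\lns=\emptyset$, the empty product gives $\nsts(\sac)=1$.

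\textbf{Conclusion.} Taking $\anod=\rot$ and $\st=\coms\in\rsts=\rstn{\rot}$ gives $\delt{1}=-\ln(\nsts(\rot))=-\ln(|\rsts|)$.

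The only step needing care is justifying the decomposition of $\sacs{\st}$. Restricting a reduced sub-strategy at $\anod$ to the subtree below each $\anod'\in\ch{\sac}\cap\lns$ must yield an element of $\rstn{\anod'}$, and these reachable action sets must be disjoint and exhaust $\sacs{\st}\setminus\{\sac\}$. I would argue this directly from the minimal-set definition of $\rable{\st}$: the subtrees below distinct children are disjoint, and reachability within each subtree depends only on $\st$ restricted there. This is routine, but it is the point where the definitions must be unpacked.
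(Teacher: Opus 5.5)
Your proof is correct and is essentially the paper's argument. The paper also inducts up the tree, directly over $\anod\in\ran{\coms}$ with hypothesis $\sum_{\ac\in\acd{\coms}{\anod}}\ln(\polt{1}(\ac))=-\ln(\nsts(\anod))$, splitting off $\coms(\anod)$ and using $\nsts(\coms(\anod))=\prod_{\anod'\in\ch{\coms(\anod)}\cap\lns}\nsts(\anod')$ exactly as you do; because it works with the descendant sets of $\coms$ itself rather than general elements of $\rstn{\anod}$, it skips the bookkeeping you flag about restrictions lying in $\rstn{\anod'}$.
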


\begin{proof}
Given some $\anod\in\ran{\coms}$ let $\acd{\coms}{\anod}$ be the set of all nodes in $\sacs{\coms}$ that are descendants of $\anod$. We take the inductive hypothesis that for any $\anod\in\ran{\coms}$ we have:
\be
\sum_{\ac\in\acd{\coms}{\anod}}\ln(\polt{1}(\ac))=-\ln(\nsts(\anod))\,.
\ee
To prove this by induction, we may assume that for all $\anod'\in\ch{\coms(\anod)}\cap\lns$ we have:
\be
\sum_{\ac\in\acd{\coms}{\anod'}}\ln(\polt{1}(\ac))=-\ln(\nsts(\anod'))\,.
\ee
Note then, that:
\begin{align*}
\sum_{\ac\in\acd{\coms}{\anod}}\ln(\polt{1}(\ac))&=\ln(\polt{1}(\coms(\anod)))+\sum_{\anod'\in\ch{\coms(\anod)}\cap\lns}\sum_{\ac\in\acd{\coms}{\anod'}}\ln(\polt{1}(\ac))\\
&=\ln\left(\frac{\nsts(\coms(\anod))}{\nsts(\anod)}\right)-\sum_{\anod'\in\ch{\coms(\anod)}\cap\lns}\ln(\nsts(\anod'))\\
&=\ln\left(\frac{\nsts(\coms(\anod))}{\nsts(\anod)}\right)-\ln\left(\prod_{\anod'\in\ch{\coms(\anod)}\cap\lns}\nsts(\anod')\right)\\
&=\ln\left(\frac{\nsts(\coms(\anod))}{\nsts(\anod)}\right)-\ln(\nsts(\coms(\anod)))\\
&=-\ln(\nsts(\anod))
\end{align*}
as required.

We have now shown that the inductive hypothesis holds for all $\anod\in\ran{\coms}$. In particular, we have:
\begin{align*}
\delt{1}&=\sum_{\ac\in\sacs{\coms}}\ln(\polt{1}(\ac))\\
&=\sum_{\ac\in\acd{\coms}{\rot}}\ln(\polt{1}(\ac))\\
&=-\ln(\nsts(\rot))\,.
\end{align*}
The result then follows from Lemma \ref{mrnrlem}.
\end{proof}

\begin{lemma}\label{finallem}
We have:
\be
\expt{\sum_{t\in[T]}\lost{t}}\leq\sum_{t\in[T]}\loss{\coms}{\rft{t}}+2\sqrt{\left(\frac{6\ln(T)}{\ep}+\frac{9(e-2)}{\ep^2}\right)|\acs|\ln(|\rsts|)T}\,.
\ee
\end{lemma}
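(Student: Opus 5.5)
We combine Lemma \ref{delmdellem1} and Lemma \ref{exptrlem} into a per-trial progress bound, telescope over $t\in[T]$, and then substitute the values of $\lr$ and $\ga$. Throughout, as stated before Lemma \ref{omegatlem}, all expectations are implicitly conditioned on $\event$, so the expectation in the statement is the conditional one.

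\textbf{Per-trial progress.} Substituting Lemma \ref{exptrlem} into Lemma \ref{delmdellem1} gives, for every $t\in[T]$,
\be
\expt{\delt{t}-\delt{t+1}}\leq\lr\ga|\acs|+\frac{9(e-2)\lr^2}{\ep^2}|\acs|-\lr\expt{\lost{t}}+\lr\loss{\coms}{\rft{t}}\,.
\ee

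\textbf{Telescoping.} Summing this over $t\in[T]$ gives $\expt{\delt{1}-\delt{T+1}}$ on the left. Since every $\polt{T+1}(\ac)\in(0,1]$, we have $\delt{T+1}\leq0$. Note that $\polt{T+1}(\ac)>0$ because the updates only multiply by positive weights $\wgt{t}{\anod}$ and renormalise. Hence $-\delt{1}\leq\expt{\delt{1}-\delt{T+1}}$ after a sign flip, and by Lemma \ref{deltolem} the left side is at least $\ln(|\rsts|)$. More precisely, $\expt{\delt{1}-\delt{T+1}}\geq\delt{1}=-\ln(|\rsts|)$. Rearranging and dividing by $\lr$ then gives
\be
\expt{\sum_{t\in[T]}\lost{t}}\leq\sum_{t\in[T]}\loss{\coms}{\rft{t}}+\frac{\ln(|\rsts|)}{\lr}+\left(\ga+\frac{9(e-2)\lr}{\ep^2}\right)|\acs|T\,.
\ee

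\textbf{Substituting the parameters.} Use $\ga=6\ln(T)\lr/\ep$, so the bracket equals $\lr\,c$ with $c:=\frac{6\ln(T)}{\ep}+\frac{9(e-2)}{\ep^2}$. By Lemma \ref{mrnrlem}, $\msn(\rot)=|\acs|$ and $\nsts(\rot)=|\rsts|$, so $\lr=\sqrt{\ln(|\rsts|)/(c|\acs|T)}$. The two error terms $\ln(|\rsts|)/\lr$ and $\lr c|\acs|T$ are then each equal to $\sqrt{c|\acs|\ln(|\rsts|)T}$, and their sum is the claimed $2\sqrt{\cdot}$.

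\textbf{Main obstacle.} The only delicate point is the sign bookkeeping for $\delt{1}$ versus $\delt{T+1}$ together with positivity of the policies, so that the logarithms are finite. A minor edge case is $|\rsts|=1$, where $\lr=0$ and the division by $\lr$ fails. In that case the learner's strategy is forced, so $\lost{t}=\loss{\coms}{\rft{t}}$ for every $t$ and the bound holds trivially.
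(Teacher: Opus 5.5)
Your proof is correct and follows the paper's argument essentially verbatim: combine Lemmas \ref{delmdellem1} and \ref{exptrlem}, telescope using $\Delta_{T+1}\leq 0$ and $\Delta_1=-\ln(|\mathcal{S}|)$ (Lemma \ref{deltolem}), divide by $\eta$, and substitute $\gamma$ and $\eta$ via Lemma \ref{mrnrlem}. Delete the sign-reversed sentence claiming $-\Delta_1\leq\mathbb{E}[\Delta_1-\Delta_{T+1}]$, which need not hold, and keep only your ``more precisely'' inequality $\mathbb{E}[\Delta_1-\Delta_{T+1}]\geq\Delta_1$, which is what the paper uses; the $|\mathcal{S}|=1$ edge case is vacuous anyway, since the paper assumes $|\mathcal{C}(r)|>1$.
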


\begin{proof}
Note that $\delt{T+1}\leq0$. Hence, by lemmas \ref{delmdellem1} and \ref{deltolem}  we have:
\begin{align*}
\frac{\ln(|\rsts|)}{\lr}&=-\frac{\delt{1}}{\lr}\\
&\geq\frac{1}{\lr}\expt{\delt{T+1}-\delt{1}}\\
&=\frac{1}{\lr}\expt{\sum_{t\in[T]}(\delt{t+1}-\delt{t})}\\
&=-\frac{1}{\lr}\sum_{t\in[T]}\expt{\delt{t}-\delt{t+1}}\\
&\geq-\sum_{t\in[T]}\loss{\coms}{\rft{t}}-\frac{1}{\lr}\sum_{t\in[T]}\expt{\ln(\nrm{t}{\rot})}\,.
\end{align*}
Applying Lemma \ref{exptrlem} then gives us:
\begin{align*}
\frac{\ln(|\rsts|)}{\lr}&\geq\sum_{t\in[T]}\expt{\lost{t}}-\sum_{t\in[T]}\loss{\coms}{\rft{t}}-\left(\ga+\frac{9(e-2)\lr}{\ep^2}\right)|\acs|T
\end{align*}
so that:
\be
\expt{\sum_{t\in[T]}\lost{t}}\leq\sum_{t\in[T]}\loss{\coms}{\rft{t}}+\frac{\ln(|\rsts|)}{\lr}+\left(\ga+\frac{9(e-2)\lr}{\ep^2}\right)|\acs|T\,.
\ee
Substituting in the values of $\ga$ and $\lr$ and applying Lemma \ref{mrnrlem} then gives us the result.
\end{proof}

Combining lemmas \ref{diffplem}, \ref{comcomlem}, \ref{eventlem} and \ref{finallem} gives us Theorem \ref{mainth}.


\newpage

\end{document}